\newcommand*{\etal}{\textit{et~al.}}
\newcommand*{\algo}[1]{\ensuremath{\mathsf{#1}}}
\newcommand*{\event}[1]{\ensuremath{\mathsf{#1}}}
\newcommand*{\card}[1]{\lvert #1 \rvert}
\newcommand{\relmiddle}[1]{\mathrel{}\middle#1\mathrel{}}
\newcommand{\mymiddle}{\relmiddle{|}}
\newcommand*{\Mclaw}{\algo{Mclaw}}
\newcommand*{\Func}{\algo{Func}}
\newcommand*{\BBHT}{\algo{BBHT}}
\newcommand*{\E}{{\mathrm{\bf E}}}
\newcommand*{\Img}{\mathrm{Im}}
\newcommand*{\A}{\mathcal{A}}
\newcommand*{\B}{\mathcal{B}}
\newcommand*{\BHT}{\algo{BHT}}
\newcommand*{\HSX}{\algo{HSX}}
\newcommand*{\MTPS}{\algo{MTPS}}
\newcommand*{\good}{\event{good}}
\newcommand*{\pregood}{\event{pregood}}
\newcommand*{\equal}{\event{equal}}
\newcommand{\hnote}[1]{#1}
\newcommand{\tnote}[1]{#1}
\newcommand{\hhnote}[1]{#1}
\renewcommand{\algorithmicensure}{\textbf{Output:}}
\def\equationautorefname~#1\null{eq.~(#1)\null}
\begin{document}

\pagestyle{plain}

\title{Improved Quantum Multicollision-Finding Algorithm}

\author{
Akinori Hosoyamada\inst{1,2} \and
Yu Sasaki\inst{1} \and
Seiichiro Tani\inst{3} \and
Keita Xagawa\inst{1}
}
\institute{
 {NTT Secure Platform Laboratories, NTT Corporation. \\
 3-9-11, Midori-cho, Musashino-shi, Tokyo 180-8585, Japan.}
 \and
 {Department of Information and Communication Engineering, Nagoya University, Furo-cho, Chikusa-ku, Nagoya 464-8603, Japan.}
 \and
 {NTT Communication Science Laboratories, NTT Corporation. \\
 3-1, Morinosato-Wakamiya, Atsugi-shi, Kanagawa 243-0198, Japan.}
 \email{\{hosoyamada.akinori,sasaki.yu,tani.seiichiro,xagawa.keita\}@lab.ntt.co.jp}
}

\maketitle

\begin{abstract}
The current paper improves the number of queries of the previous quantum multi-collision finding algorithms presented by Hosoyamada \etal~at Asiacrypt 2017.
Let an $l$-collision be a tuple of $l$ distinct inputs that result in the same output of a target function.
In cryptology, it is important to study how many queries are required to find $l$-collisions for random functions of which domains are larger than ranges.
The previous algorithm finds an $l$-collision for a random function by recursively calling the algorithm for finding $(l-1)$-collisions, and it achieves the average quantum query complexity of $O(N^{(3^{l-1}-1) / (2 \cdot 3^{l-1})})$, where $N$ is the range size of target functions.
The new algorithm removes the redundancy of the previous recursive algorithm so that different recursive calls can share a part of computations. The new algorithm finds an $l$-collision for random functions with the average quantum query complexity of $O(N^{(2^{l-1}-1) / (2^{l}-1)})$, which improves the previous bound for all $l\ge 3$ (the new and previous algorithms achieve the optimal bound for $l=2$). 
More generally, 
the new algorithm achieves the average quantum query complexity of $O\left(c^{3/2}_N N^{\frac{2^{l-1}-1}{ 2^{l}-1}}\right)$ for a random function $f\colon X\to Y$ such that
$|X| \geq l \cdot |Y| / c_N$ for any $1\le c_N \in  o(N^{\frac{1}{2^l - 1}})$.
With the same query complexity, it also finds a multiclaw for random functions, which is harder to find than a multicollision.

\bigskip
\textbf{Keywords}
post-quantum cryptography, quantum algorithm, multiclaw, multicollision
\end{abstract}


\section{Introduction}\label{sec:introduction}

Post-quantum cryptography has recently been discussed very actively in the cryptographic community. Quantum computers would completely break many classical public-key cryptosystems. In response, NIST is now conducting a standardization to select new public-key cryptosystems that resist attacks with quantum computers. Given this background, it is now important to investigate how quantum computers can impact on other cryptographic schemes including cryptographic hash functions. 

A multicollision for a function $f$ denotes multiple inputs to $f$ such that they are mapped to the same output value. In particular, an $l$-collision denotes a tuple of $l$ distinct inputs $x_1,x_2,\cdots,x_l$ such that $f(x_1) = f(x_2) = \cdots = f(x_l)$.

A multicollision is an important object in cryptography.
Lower bounds \tnote{on} the complexity of finding a multicollision \tnote{are} sometimes used to derive security bounds in the area of provable security (e.g., security bounds for the schemes based on the sponge construction~\cite{DBLP:conf/asiacrypt/JovanovicLM14}).
In a similar context, the complexity of finding a multicollision directly impacts on the best cryptanalysis against some constructions. Furthermore, multicollisions can be used as a proof-of-work for blockchains. In digital payment schemes, a coin must be a bit-string the validity of which can be easily checked but which is hard to produce. A micro-payment scheme, MicroMint~\cite{DBLP:conf/spw/RivestS96}, defines coins as $4$-collisions for a function. If $4$-collisions can be produced quickly, a malicious user can counterfeit coins.
Some recent works prove the security of schemes and protocols based on the assumption that there exist functions for which it is hard to find multicollisions~\cite{DBLP:conf/stoc/BitanskyKP18,DBLP:conf/eurocrypt/BermanDRV18,DBLP:conf/eurocrypt/KomargodskiNY18}.

Hosoyamada \etal~\cite{DBLP:conf/asiacrypt/HosoyamadaSX17} provided a survey of multicollision finding algorithms with quantum computers. They first showed that an $l$-collision can be produced with at most $O(N^{1/2})$ queries on average to the target random function with range size $N$ by iteratively applying the Grover search \cite{DBLP:conf/stoc/Grover96,boyer1998tight} $l$ times. They also reported that a combination of Zhandry's algorithm with $l=3$ \cite{DBLP:journals/qic/Zhandry15} and Belovs' algorithm \cite{DBLP:conf/focs/Belovs12} achieves $O(N^{10/21})$ for $l=3$, which is faster than the simple application of Grover's algorithm. Finally, Hosoyamada~\etal~presented their own algorithm that recursively applies the collision finding algorithm by Brassard, H{\o}yer, and Tapp~\cite{DBLP:conf/latin/BrassardHT98}. Their algorithm achieves the average query complexity of $O(N^{(3^{l-1}-1) / (2 \cdot 3^{l-1})})$ for every $l\ge 2$. For $l=3$ and $l=4$, the complexities are \tnote{$O(N^{4/9})$ and $O(N^{13/27})$}, respectively, and the algorithm works as follows.
\begin{itemize}
\item To search for 3-collisions, it first iterates the $O(N^{1/3})$-query quantum algorithm for finding a $2$-collision $O(N^{1/9})$ times. Then, it searches for the preimage of \tnote{any} one of the $O(N^{1/9})$ 2-collisions by using Grover's algorithm, which runs with $O(N^{4/9})$ queries.
\item To search for 4-collisions, it iterates the \tnote{$O(N^{4/9})$}-query quantum algorithm for finding a $3$-collision $O(N^{1/27})$ times. Then, it searches for the preimage of any one of the $O(N^{1/27})$ 3-collisions with $O(N^{13/27})$ queries.
\end{itemize}

As demonstrated above, the recursive algorithm by Hosoyamada~\etal~\cite{DBLP:conf/asiacrypt/HosoyamadaSX17} runs $(l-1)$-collision algorithm multiple times, but in each invocation, the algorithm starts from scratch. This fact motivates us to consider reusing the computations when we search for \tnote{multiple} $(l-1)$-collisions.

\subsubsection{Our Contributions.}
In this paper, we improve the quantum query complexity of the previous multicollision finding algorithm by removing the redundancy of \tnote{the algorithm}.
Consider the problem of finding an $l$-collision of a random function $f \colon X \rightarrow Y$, where $l\ge 2$ is an integer constant and $|Y|=N$.
In addition, suppose that there exists a parameter $c_N \geq 1$ such that $c_N = o(N^{\frac{1}{2^l - 1}})$ and $|X| \geq l \cdot |Y| / c_N$.
Then, the new algorithm achieves the average quantum query complexity of $O\left(c^{3/2}_N N^{\frac{2^{l-1}-1}{ 2^{l}-1}}\right)$.
In particular, if we can take $c_N$ as a constant, then our algorithm can find an $l$-collision of a random function with $O\left(N^{\frac{2^{l-1}-1}{ 2^{l}-1}}\right)$ queries on average, which improves the previous quantum query complexity $O\left(N^{\frac{3^{l-1}-1}{2 \cdot 3^{l-1}-1}}\right)$~\cite{DBLP:conf/asiacrypt/HosoyamadaSX17} and matches with the lower bound proved by Liu and Zhandry~\cite{DBLP:journals/iacr/LiuZ18}.

The complexities for small $l$'s are listed in~\autoref{tbl:numbers}.
A comparison between them can be found in~\autoref{fig:Known-Bounds-Rnd}.
Our algorithm finds a $2$-collision, $3$-collision, $4$-collision, and $5$-collision of SHA3-512 with $2^{170.7}$, $2^{219.4}$, $2^{238.9}$, and $2^{247.7}$ quantum queries, respectively, \tnote{up to a constant factor} (\autoref{tbl:numbers512}).

Moreover, our new algorithm finds multiclaws for random functions, which are harder to find than multicollisions:
An $l$-claw for functions $f_i \colon X_i \rightarrow Y$ for $1 \leq i \leq l$ is defined as a tuple $(x_1, \dots, x_l)\in X_1\times \cdots\times X_l$ such that $f_i(x_i) = f_j(x_j)$ for all $(i,j)$.
If there exists a parameter $c_N \geq 1$ such that $c_N = o(N^{\frac{1}{{2^l}-1}})$ and $|X_i| \geq |Y| / c_N$ for each $i$, our quantum algorithm finds an $l$-claw for random functions $f_i$'s with $O\left(c^{3/2}_N N^{\frac{2^{l-1}-1}{2^{l}-1}}\right)$ quantum queries on average.
In particular, if we can take $c_N$ as a constant, then our algorithm can find an $l$-claw with $O\left(N^{\frac{2^{l-1}-1}{ 2^{l}-1}}\right)$ quantum queries.

In this paper, we do not provide the analyses of other complexity measures such as time/space complexity and the depth of quantum circuits, but it is not difficult to show 
with analyses similar to those in Ref.~\cite{DBLP:conf/asiacrypt/HosoyamadaSX17}
that the space complexity and the circuit depth are the same order as the query complexity up to a polylogarithmic factor.

Hereafter, we only consider \emph{average} quantum query complexity over random functions
as the performance of algorithms unless stated otherwise.

\begin{table}[!htb]
\begin{center}
\caption{\tnote{Query} complexities of $l$-collision finding quantum algorithms. Each fraction denotes the logarithm of the number of queries to the base $N$.
The query complexity asymptotically approaches $1/2$ as $l$ increases.}
\label{tbl:numbers}
\renewcommand{\arraystretch}{2}
\begin{tabular}{c | c@{}c@{}c@{}c@{}c@{}c@{}c }
\toprule
$l$ & \makebox[1cm]{2}  & \makebox[1cm]{3} & \makebox[1cm]{4} & \makebox[1cm]{5} & \makebox[1cm]{6} & \makebox[1cm]{7} & \makebox[1cm]{8} \\
\midrule
\quad \cite{DBLP:conf/asiacrypt/HosoyamadaSX17} : {\large $\frac{3^{l-1}-1}{2 \cdot 3^{l-1}}$} & {\Large $\frac{1}{3}$}& {\Large $\frac{4}{9}$} & {\Large $\frac{13}{27}$} & {\Large $\frac{40}{81}$} & {\Large $\frac{121}{243}$} & {\Large $\frac{364}{729}$} & {\Large $\frac{1093}{2187}$}  \\
Ours : {\large $\frac{2^{l-1}-1}{2^{l}-1}$} & {\Large $\frac{1}{3}$}& {\Large $\frac{3}{7}$} & {\Large $\frac{7}{15}$} & {\Large $\frac{15}{31}$} & {\Large $\frac{31}{63}$} & {\Large $\frac{63}{127}$} & {\Large $\frac{127}{255}$}  \\ \bottomrule
\end{tabular}
\begin{tabular}{c | c@{}c@{}c@{}c@{}c@{}c@{}c }
\toprule
$l$ & \makebox[1cm]{2}  & \makebox[1cm]{3} & \makebox[1cm]{4} & \makebox[1cm]{5} & \makebox[1cm]{6} & \makebox[1cm]{7} & \makebox[1cm]{8} \\
\midrule
\quad \cite{DBLP:conf/asiacrypt/HosoyamadaSX17} : {\large $\frac{3^{l-1}-1}{2 \cdot 3^{l-1}}$} & {\small $0.3333..\ $}& {\small $0.4444..\ $} & {\small $0.4814..\ $} & {\small $0.4938..\ $} & {\small $0.4979..\ $} & {\small $0.4993..\ $} & {\small $0.4997..\ $}  \\
Ours : {\large $\frac{2^{l-1}-1}{2^{l}-1}$} & {\small $0.3333..\ $}& {\small $0.4285..\ $} & {\small $0.4666..\ $} & {\small $0.4838..\ $} & {\small $0.4920..\ $} & {\small $0.4960..\ $} & {\small $0.4980..\ $}  \\ \bottomrule
\end{tabular}
\end{center}
\end{table}

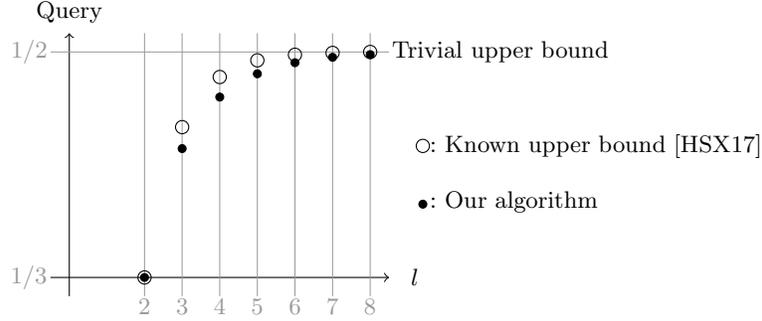
\begin{figure}[tbp]
\centering
\begin{tikzpicture}[scale=.5,
inner sep=1pt,
ours/.style={circle,draw=black,fill=black,thin,radius=2pt}]
\draw[->,thin] (-0.5,0) -- (8.5,0);
\draw[->,thin] (0,-1/2) -- (0,6.5);
\node[anchor=south] at (0,6.7) {Query};
\node[anchor=west] at (9,0) {$l$};
\node[gray!80,anchor=east] at (-0.5,0) {$1/3$};
\draw[gray!80,very thin] (-1/2,6) node[anchor=east] {$1/2$} -- (8.5,6); 
\node[anchor=west] at (8.5,6) {Trivial upper bound};
\foreach \x in {2,...,8} {\draw[gray!80,very thin] (\x,-1/2) node[anchor=north] {$\x$} -- (\x,6.5);}
\draw (2,36 * 1/3 - 12)			circle [radius=5pt];
\draw (3,36 * 4/9 - 12)			circle [radius=5pt];
\draw (4,36 * 13/27 - 12)			circle [radius=5pt];
\draw (5,36 * 40/81 - 12)			circle [radius=5pt];
\draw (6,36 * 121/243 - 12)		circle [radius=5pt];
\draw (7,36 * 364/729 - 12)		circle [radius=5pt];
\draw (8,6)		circle [radius=5pt];
\filldraw [ours] (2,36 * 1/3 - 12)			circle [radius=3pt];
\filldraw [ours] (3,36 * 3/7 - 12)			circle [radius=3pt];
\filldraw [ours] (4,36 * 7/15 - 12)			circle [radius=3pt];
\filldraw [ours] (5,36 * 15/31 - 12)			circle [radius=3pt];
\filldraw [ours] (6,36 * 31/63 - 12)		circle [radius=3pt];
\filldraw [ours] (7,36 * 63/127 - 12)		circle [radius=3pt];
\filldraw [ours] (8,36 * 127/255 - 12)		circle [radius=3pt];
\filldraw [ours] (9.4,1.95) circle [radius=3pt];
\node [anchor=west] at (9.5,2) {: Our algorithm};
\draw (9.4,3.5) circle [radius=5pt];
\node [anchor=west] at (9.5,3.5) {: Known upper bound~\cite{DBLP:conf/asiacrypt/HosoyamadaSX17}};
\end{tikzpicture}
\caption{Quantum query complexity for finding an $l$-collision. ``Query'' denotes the logarithm of  the number of queries to the base $N$.}
\label{fig:Known-Bounds-Rnd}
\end{figure}

\begin{table}[!htb]
\begin{center}
\caption{The number of queries required to find an $l$-collision of SHA3-512.
The numbers in the first row are obtained from the concrete bound given in~\cite[Thm.5.1]{DBLP:conf/asiacrypt/HosoyamadaSX17}, and those in the second row are obtained from the concrete bound given in \autoref{thm:main} with $k=2$.}
\label{tbl:numbers512}
\renewcommand{\arraystretch}{2}
\begin{tabular}{c | c@{}c@{}c@{}c@{}c@{}c@{}c}
\toprule
$l$ & \makebox[1cm]{2}  & \makebox[1cm]{3} & \makebox[1cm]{4} & \makebox[1cm]{5} \\
\midrule
\cite[Thm 5.1]{DBLP:conf/asiacrypt/HosoyamadaSX17} & {$2^{179}$}& {$2^{238}$} & {$2^{260}$} & {$2^{268}$}  \\
Ours,~\autoref{thm:main} & { $2^{181}$}& {$2^{230}$} & {$2^{250}$} & {$2^{259}$} \\ \bottomrule
\end{tabular}
\end{center}
\end{table}

\subsubsection{Paper Outline.}
The remaining of this paper is organized as follows.
\hnote{In \autoref{sec:prelim}, we describe notations, definitions and settings.
In \autoref{sec:previous}, we review previous works related to the multicollision-finding problem.
In \autoref{sec:alg}, we provide our new quantum algorithm and its complexity analysis.
In \autoref{sec:conclusion}, we conclude this paper.}

\subsubsection{Concurrent Work.}
Very recently, Liu and Zhandry~\cite{DBLP:journals/iacr/LiuZ18} showed that
 for every integer constant $l\ge 2$, $\Theta\left(N^{\frac{1}{2}(1-\frac{1}{2^l-1})}\right)$ quantum queries are both necessary and sufficient to find a $l$-collision with constant probability, for a random function. That is, they gave an improved upper bound and a new lower bound on the average case. 
The comparisons are summarized as follows:
\begin{itemize}
\item Liu and Zhandry consider the $l$-collision case that $|X| \geq l |Y|$, where $X$ is the domain and $Y$ is the range.
We treat the case that $|X| \geq \frac{l}{c_N} |Y|$ holds for any positive value $c_N\ge 1$ which is in $o(N^{\frac{1}{2^l-1}})$. We also consider the \emph{multiclaw} case.
\item Their exponent $\frac{1}{2}(1-\frac{1}{2^{l}-1})$ is the same as ours $\frac{2^{l-1}-1}{2^l-1}$.
\item They give the upper bound $O\left(N^{\frac{1}{2}(1-\frac{1}{2^l-1})}\right)$, while we give $O\left(c^{3/2}_N N^{\frac{1}{2}(1-\frac{1}{2^l-1})}\right)$. When $c_N$ is a constant, our bound matches their bound.
\item They give a lower bound, which matches with their and our upper bound. 
\end{itemize}
We finally note that our result on an improved $l$-collision finding algorithm
 for the case $|X| \geq l |Y|$ with query complexity $O\left(N^{\frac{1}{2}(1-\frac{1}{2^l-1})}\right)$
 is reported in the Rump Session of Asiacrypt 2017.

\section{Preliminaries}
\label{sec:prelim}
For a positive integer $M$, let $[M]$ denote the set $\{1,\dots,M\}$.
In this paper, $N$ denotes a positive integer.
We assume that $l$ is a positive integer constant.
We focus on reducing quantum \emph{query} complexities for finding multicollisions and multiclaws.
Unless otherwise noted, all sets are non-empty and finite.
For \tnote{sets} $X$ and $Y$, $\Func(X,Y)$ denotes the set of functions from $X$ to $Y$.
For each $f \in \Func(X,Y)$, we denote the set $\{f(x) \mid x \in X\}$ by $\mathrm{Im}(f)$.
For a \tnote{set} $X$, let $U(X)$ denote the uniform distribution over $X$.
For a distribution $\mathcal{D}$ on a set $X$, let $x \sim \mathcal{D}$ denote that $x$ is a random variable that takes \tnote{a value drawn from $X$ according to $\mathcal{D}$}.
When we say that an oracle of a function $f \colon X \rightarrow Y$ is available, we consider the situation that each elements of $X$ and $Y$ are encoded into suitable binary strings, and the oracle gate $O_f \colon \ket{x,z} \mapsto \ket{x,z \oplus f(x)}$ is available.

An \emph{$l$-collision} for a function $f\colon X \rightarrow Y$ is a tuple of elements $(x_1,\dots,x_l,y)$ in $X^\ell \times Y$ such that $f(x_i)=f(x_j)=y$ and $x_i \neq x_j$ for all $1 \leq i \neq j \leq l$.
An $l$-collision is simply called a \emph{collision} for $l=2$, and called a \emph{multicollision} for $l \geq 3$.
Moreover, an \emph{$l$-claw} for functions $f_i\colon X_i\rightarrow Y$ for $1 \leq i \leq l$ is a tuple $(x_1,\dots,x_l,y) \in X_1 \times \cdots \times X_l \times Y$ such that $f_1(x_1)= \cdots = f_l(x_l)=y$.
An $l$-claw is simply called a \emph{claw} for $l=2$, and called a \emph{multiclaw} for $l \geq 3$.

\tnote{The problems} of finding multicollisions or multiclaws are often studied in the \tnote{contexts} of both cryptography and quantum computation, but the problem \tnote{settings} of interest change depending on the \tnote{contexts}.
In the context of quantum computation, most problems are studied in the \emph{worst case}, and an algorithm is said to \tnote{(efficiently)} solve a problem only when \tnote{it does (efficiently)} for all functions.
On the other hand, most problems in cryptography are studied in the \emph{average case}, since randomness is one of the most crucial notions in cryptography.
In particular, we say that an algorithm \tnote{(efficiently)} solves a problem if it does so with a high probability on average over randomly chosen functions.

This paper focuses on the settings of interest in the context of cryptography.
Formally, our goal is to solve the following two problems.
\begin{problem}[Multicollision-finding problem, average case]\label{prob:mcoll}
Let $l\ge 2$ be a positive integer constant, and $X,Y$ denote \tnote{non-empty} finite sets.
Suppose that \tnote{a function} $F \colon X \rightarrow Y$ is chosen uniformly at random and \tnote{given} as a quantum oracle.
Then, find an $l$-collision for $F$.
\end{problem}
\begin{problem}[Multiclaw-finding problem, average case]\label{prob:mclaw}
Let $l\ge 2$ be a positive integer constant, and $X_1, \dots, X_l,Y$ denote \tnote{non-empty} finite sets.
Suppose \tnote{that} functions $f_i \colon X_i \rightarrow Y (1 \leq i \leq l)$ are chosen \tnote{independently and} uniformly at random, and \tnote{given} as quantum oracles.
Then, find an $l$-claw for $f_1,\dots,f_l$.
\end{problem}

Roughly speaking, \autoref{prob:mcoll} is easier to solve than \autoref{prob:mclaw}.
Suppose \tnote{that} $F \colon X \rightarrow Y$ is a function, and we want to find an $l$-collision for $F$.
Let $X_1,\dots,X_l$ be subsets of $X$ such that $X_i \cap X_j = \emptyset$ for $i \neq j$ and $\bigcup_i X_i = X$.
If $(x_1,\dots,x_l,y)$ is an $l$-claw for $F|_{X_1}, \dots, F|_{X_l}$, then it is obviously an $l$-collision for $F$.
In general, an algorithm for finding an \tnote{$l$-claw} can be converted into one for finding an $l$-collision.
To be precise, the following lemma holds.
\begin{lemma}\label{lem:clawtocoll}
Let $X,Y$ be non-empty \tnote{finite} sets, and $X_1,\dots,X_l$ be subsets of $X$ such that $X_i \cap X_j = \emptyset$ for $i \neq j$ and $\bigcup_i X_i = X$.
If there exists a quantum algorithm $\A$ that solves \autoref{prob:mclaw} for the sets $X_1,\dots,X_l,Y$ by making at most $q$ quantum queries with probability at least $p$, then there exists a quantum algorithm $\B$ that solves \autoref{prob:mcoll} for the sets $X,Y$ by making at most $q$ quantum queries with probability at least $p$.
\end{lemma}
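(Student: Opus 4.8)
The plan is to construct the algorithm $\B$ directly from $\A$ by a straightforward simulation argument, exploiting the fact that the restrictions $F|_{X_1},\dots,F|_{X_l}$ are exactly the kind of oracles $\A$ expects as input. First I would observe that since $F$ is chosen uniformly at random from $\Func(X,Y)$ and the $X_i$ partition $X$, the tuple of restrictions $(F|_{X_1},\dots,F|_{X_l})$ is distributed exactly as $l$ independent uniformly random functions $f_i \colon X_i \to Y$. This is the key distributional observation: restricting a uniform random function to disjoint subdomains yields independent uniform random functions on those subdomains, matching the input distribution of \autoref{prob:mclaw}.

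Next I would show that $\B$ can simulate each oracle $O_{F|_{X_i}}$ using a single query to $O_F$, with no extra queries. Given the oracle $O_F \colon \ket{x,z}\mapsto\ket{x,z\oplus F(x)}$ and a query $\ket{x_i,z}$ intended for $F|_{X_i}$ (where $x_i\in X_i$), the algorithm $\B$ simply embeds $x_i$ into $X$ via the inclusion $X_i\hookrightarrow X$ and forwards the query to $O_F$; since $F(x_i)=F|_{X_i}(x_i)$ for $x_i\in X_i$, the response is correct. Thus $\B$ runs $\A$ on the simulated oracles $O_{F|_{X_1}},\dots,O_{F|_{X_l}}$, and each query $\A$ makes to any of these oracles costs $\B$ exactly one query to $O_F$. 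Consequently $\B$ makes at most $q$ queries in total.

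Finally I would verify correctness of the output. When $\A$ succeeds, it returns an $l$-claw $(x_1,\dots,x_l,y)$ for $F|_{X_1},\dots,F|_{X_l}$, meaning $F|_{X_i}(x_i)=y$ for all $i$ with $x_i\in X_i$. Since the $X_i$ are pairwise disjoint, the elements $x_1,\dots,x_l$ are automatically distinct, and $F(x_i)=y$ for all $i$, so $(x_1,\dots,x_l,y)$ is a genuine $l$-collision for $F$. The algorithm $\B$ outputs this tuple. Because the simulation is perfect and introduces no additional failure probability, $\B$ succeeds exactly when $\A$ does, giving success probability at least $p$ over the random choice of $F$, which by the distributional observation corresponds to the success probability of $\A$ over independent random $f_i$'s.

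I do not expect any serious obstacle here, as the argument is a routine oracle-simulation reduction; the only point requiring mild care is the distributional claim that disjointness and the partition property make the restrictions independent and uniform, together with noting that disjointness of the $X_i$ guarantees the distinctness of the $x_i$ required in the definition of an $l$-collision. Everything else is a query-preserving, probability-preserving simulation.
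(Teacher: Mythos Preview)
Your proposal is correct and matches the paper's approach. The paper does not give a formal proof of this lemma; it only states the informal observation preceding it (that an $l$-claw for $F|_{X_1},\dots,F|_{X_l}$ is obviously an $l$-collision for $F$), and your argument is simply a careful fleshing-out of that observation, including the distributional point that the restrictions are independent uniform random functions and the query-preserving simulation.
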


How to measure the size of a problem also changes depending on which context we are in.
In the context of cryptography, the problem size is often regarded as the size of the range of functions in the problem rather than the size of the domains, since the domains of cryptographic functions such as hash functions are much larger than their ranges.
Hence, we regard the range size $|Y|$ as the size of \autoref{prob:mcoll} (and~\autoref{prob:mclaw}) when we analyze the complexity of quantum algorithms.

In the context of quantum computation, there exist previous works on problems related to ours~\cite{DBLP:conf/focs/Belovs12,DBLP:conf/focs/Ambainis04,DBLP:journals/tcs/Tani09,DBLP:conf/coco/BuhrmanDHHMSW01} (element distinctness problem, for example), but those works usually focus on the worst case complexity and regard the domain sizes of functions as the problem size.
In particular, there does not exist any previous work that studies multiclaw-finding problem for general $l$ in the average case, to the best of authors' knowledge.

\section{Previous Works}
\label{sec:previous}

\subsection{The Grover Search and Its Generalization}
As a main tool for developing quantum algorithms, we use the quantum database search algorithm that was originally developed by Grover~\cite{DBLP:conf/stoc/Grover96} and later generalized by Boyer, Brassard, H{\o}yer, and Tapp~\cite{boyer1998tight}.
Below we introduce the generalized version.
\begin{theorem}\label{thm:BBHT}
Let $X$ be a non-empty finite set and $f \colon  X \rightarrow \{0,1\}$ be a function such that $t / |X| < 17/81$, where $t = |f^{-1}(1)| $.
Then, there exists a quantum algorithm $\BBHT$ that finds $x$ such that $f(x)=1$ with an expected number of quantum queries to $f$ at most 
\[
\frac{4|X|}{\sqrt{(|X|-t)t}} \leq \frac{9}{2} \cdot \sqrt{\frac{|X|}{t}}.
\]
If $f^{-1}(1) = \emptyset$, then $\BBHT$ runs forever.
\end{theorem}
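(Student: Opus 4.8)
The plan is to derive the statement from the amplitude-amplification analysis of Boyer, Brassard, H{\o}yer, and Tapp~\cite{boyer1998tight}, with the explicit constants coming from elementary calculus and algebra. First I would recall the geometric picture of one Grover iteration: writing $\theta = \arcsin\sqrt{t/|X|}$, applying the Grover operator to the uniform superposition over $X$ acts as a rotation by angle $2\theta$ inside the two-dimensional real space spanned by the uniform superpositions over the marked set $f^{-1}(1)$ and over its complement. Consequently, after $j$ iterations a measurement returns a marked element with probability exactly $\sin^2\big((2j+1)\theta\big)$.

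Because $t$ is unknown, $\BBHT$ cannot fix the iteration count in advance, and the standard remedy is an exponential search: maintain a parameter $m$, initialised to $1$; in each round draw $j$ uniformly from $\{0,\dots,\lceil m\rceil-1\}$, run $j$ Grover iterations, measure, and test the outcome; on failure replace $m$ by $\min(\lambda m,\sqrt{|X|})$ for a fixed growth rate $1<\lambda<4/3$. The key averaging identity is
\[
\frac{1}{m}\sum_{j=0}^{m-1}\sin^2\big((2j+1)\theta\big)
= \frac{1}{2} - \frac{\sin(4m\theta)}{4m\sin(2\theta)},
\]
obtained from the half-angle formula together with the closed form of $\sum_{j=0}^{m-1}\cos((4j+2)\theta)$. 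Since $\sin(2\theta) = 2\sqrt{t(|X|-t)}/|X|$, the right-hand side is at least $1/4$ as soon as $m \geq 1/\sin(2\theta)$.

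The expected-query bound then follows by summing over rounds. The rounds with $m < 1/\sin(2\theta)$ contribute a geometric series in $m$ summing to $O(1/\sin(2\theta))$ queries; once $m$ crosses the threshold $1/\sin(2\theta)$ each round succeeds with probability at least $1/4$, so in expectation only $O(1)$ further rounds are needed, each of cost $O(1/\sin(2\theta))$. Tuning $\lambda$ produces the clean bound $4|X|/\sqrt{(|X|-t)t} = 8/\sin(2\theta)$ claimed by the first inequality. The final ``runs forever'' clause is immediate: if $f^{-1}(1)=\emptyset$ then every measurement fails and the loop never terminates.

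It then remains to verify the second inequality, which is pure algebra: $\tfrac{4|X|}{\sqrt{(|X|-t)t}} \le \tfrac{9}{2}\sqrt{|X|/t}$ is equivalent to $\tfrac{|X|}{|X|-t} \le \tfrac{81}{64}$, i.e.\ $81t \le 17|X|$, which is exactly the hypothesis $t/|X| < 17/81$. The main obstacle is not this algebra but pinning down the exact constant $4$ (equivalently the factor $8/\sin(2\theta)$) in the first inequality: this requires carefully accounting for the boundary rounds of the exponential search and optimising the growth rate $\lambda$, rather than settling for the loose $O(\sqrt{|X|/t})$ that the asymptotic argument gives for free.
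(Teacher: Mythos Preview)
The paper does not prove this theorem at all: it is quoted as a known result from Boyer, Brassard, H{\o}yer, and Tapp~\cite{boyer1998tight} and used as a black box throughout. Your proposal is a faithful sketch of the original BBHT argument (the rotation picture, the exponential search over the unknown iteration count, the averaging identity $\tfrac{1}{m}\sum_{j=0}^{m-1}\sin^2((2j+1)\theta)=\tfrac{1}{2}-\tfrac{\sin(4m\theta)}{4m\sin(2\theta)}$, and the threshold $m\ge 1/\sin(2\theta)$), and your algebraic reduction of the second inequality to the hypothesis $t/|X|<17/81$ is correct. So there is nothing to compare against in the paper; your route is the standard one, and the only point that would need genuine care in a full write-up is the one you flag yourself, namely extracting the exact constant $4|X|/\sqrt{(|X|-t)t}=8/\sin(2\theta)$ from the two-phase (pre- and post-threshold) query accounting with a specific growth rate~$\lambda$.
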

\autoref{thm:BBHT} implies that we can find $l$-collisions and $l$-claws for random functions with \tnote{$O(\sqrt{N})$} quantum queries, if the sizes of range(s) and domain(s) of function(s) are $\Theta(N)$:
Suppose that we are given random functions $f_i \colon  X_i \rightarrow Y$ for $1 \leq i \leq l$, where $|X_1|,\dots,|X_l|$, and $|Y|$ are all in $\Theta(N)$, and we want to find an $l$-claw for those functions.
Take an element $y \in Y$ randomly, and define $F_i \colon  X_i \rightarrow \{0,1\}$ for each $i$ by $F_i(x)=1$ if and only if $f_i(x)=y$.
Then, \tnote{effectively}, by applying \BBHT~to each $F_i$, we can find $x_i \in X_i$ such that $f_i(x_i)=y$ for each $i$ with $O(\sqrt{N})$ quantum queries  with a constant probability.
Similarly we can find an $l$-collision for a random function $F\colon [N] \rightarrow [N]$ with $O( \sqrt{N})$ quantum queries.
In particular, $O(\sqrt{N})$ is a trivial upper bound of \autoref{prob:mcoll} and \autoref{prob:mclaw}.

\subsection{The BHT Algorithm}
Brassard, H\o yer, and Tapp~\cite{DBLP:conf/latin/BrassardHT98} developed a quantum algorithm that finds $2$-claws (below we call it \BHT).\footnote{
As in our case, the BHT algorithm also focus on only quantum query complexity.
Although it runs in time $\tilde{O}(N^{1/3})$ on an idealized quantum computer, it requires $\tilde{O}(N^{1/3})$ qubits to store data in quantum memories.
Recently Chailloux et al.~\cite{DBLP:conf/asiacrypt/ChaillouxNS17} has developed a quantum $2$-collision finding algorithm that runs in time $\tilde{O}(N^{2/5})$, which is polynomially slower than the BHT algorithm but requires only $O(\log N)$ quantum memories.
}
\tnote{\BHT~finds} a claw for two \tnote{one-to-one} functions $f_1\colon X_1 \rightarrow Y$ and $f_2 \colon  X_2 \rightarrow Y$ \tnote{as sketched in the following}.
For simplicity, here we assume $|X_1| = |X_2| = |Y|=N$.
Under this setting, \BHT~finds a $2$-claw with $O(N^{1/3})$ quantum queries.

\paragraph{\tnote{Rough Sketch of \BHT}}:
\begin{description}
\item[1. Construction of a list $L$.] Take a subset $S \subset X_1$ of size $N^{1/3}$ arbitrarily.
For each $x \in S$, \tnote{compute} the value $f_1(x)$ by making a query and store the pair $(x,f_1(x))$ in a list $L$.
\item[2. Extension to a claw.] Define a function $F_L \colon  X_2 \rightarrow \{0,1\}$ by $F_L(x')=1$ if and only if the value $f_2(x') \in Y$ appears in the list $L$ (i.e., there exists $x_1 \in S$ such that $f_2(x')=f_1(x_1)$).
Apply $\BBHT$~to $F_L$ and find $x_2 \in X_2$ such that $f_2(x_2)$ appears in $L$.
\item[3. Finalization.] Find $(x_1,f_1(x_1)) \in L $ such that $f_1(x_1) = f_2(x_2)$, and then output $(x_1,x_2)$.
\end{description}

\paragraph{Quantum query complexity.}
\BHT~finds a claw with $O(N^{1/3})$ quantum queries.
In the first step, the list $L$ is constructed by making $N^{1/3}$ queries to $f_1$.
In the second step, since $|F^{-1}_L(1)| = |f^{-1}_2(f_1(S))|$ is equal to $N^{1/3}$, $\BBHT$ finds $x_2$ with $O(\sqrt{N/N^{1/3} }) = O(N^{1/3})$ quantum queries to $f_2$ (note that we can evaluate $F_L$ by making one query to $f_2$).
The third step does not require queries.
Therefore \BHT~finds a collision by making $O(N^{1/3})$ quantum queries in total in the worst case.

\subsubsection{Extension to a collision-finding algorithm.}
It is not difficult to show that \BHT~works for random functions.
Thus, $\BHT$~can be extended to the quantum collision-finding algorithm
as mentioned in~\autoref{sec:prelim}.
Suppose we want to find a ($2$-)collision for a random function $F \colon  X \rightarrow Y$.
Here we assume $|X|=2N$ and $|Y|=N$ for simplicity.
Now, choose a subset $X_1 \subset X$ of size $N$ arbitrarily and let $X_2 \colon= X \setminus X_1$.
Then we can find a collision for $F$ by applying the $\BHT$ algorithm introduced above to the functions $F|_{X_1}$ and $F|_{X_2}$, since a claw for them becomes a collision for $F$.

\subsection{The HSX Algorithm}
Next, we introduce a quantum algorithm for finding multicollisions \tnote{that was} developed by Hosoyamada, Sasaki, and Xagawa~\cite{DBLP:conf/asiacrypt/HosoyamadaSX17} (the algorithm is designed to find only multicollisions, and cannot find multiclaws).
Below we call their algorithm \HSX.

The main idea of \HSX~is to apply the strategy of \BHT~recursively:
To find an $l$-collision, \HSX~calls itself recursively to find many $(l-1)$-collisions, and then extend any one of those $(l-1)$-collisions to an $l$-collision by applying \BBHT.

\paragraph{\tnote{Rough Sketch of \HSX}}:
In what follows, $N$ denotes $|Y|$.
Let us denote $\HSX(l)$ by the HSX algorithm \tnote{for finding} $l$-collisions.
$\HSX(l)$ finds an $l$-collision for a random function $f\colon X \rightarrow Y$ with $|X| \geq l \cdot |Y|$ as follows.
\begin{description}
\item[Recursive call to construct a list $L_{l-1}$.] Apply $\HSX(l-1)$ to $f$ $N^{1/3^{l-1}}$ times to obtain $N^{1/3^{l-1}}$ many $(l-1)$-collisions. Store those $(l-1)$-collisions in a list $L_{l-1}$.
\item [Extension to an $l$-collision.] Define $F_{l-1} \colon  X \rightarrow \{0,1\}$ by $F_{l-1}(x')=1$ if and  only if there exists an $(l-1)$-collision $(x_1,\dots,x_{l-1},y) \in L_{l-1}$ such that $(x_1,\dots,x_{l-1},x',y)$ forms an $l$-collision for $f$, i.e., $f(x')=y$ and $x' \neq x_i$ for $1 \leq i \leq l-1$.
Apply $\BBHT$ to $F_{l-1}$ to find $x_{l} \in X$ such that $F_{l-1}(x_l)=1$.
\item[Finalization.] Find $(x_1,\dots,x_{l-1},y) \in L_{l-1}$ such that $F_{l-1}(x_l)=y$.
Output $(x_1,\dots,x_{l-1},x_l,y)$. 
\end{description}

\paragraph{Quantum query complexity.}
\HSX~finds a $l$-collision with $O(N^{(3^{l-1}-1)/2 \cdot 3^{l-1}})$ quantum queries on average, which can be shown by induction as follows.
For $2$-collisions, $\HSX(2)$ matches the \BHT~algorithm. 
For general $l \geq 3$, suppose that $\HSX(l-1)$ finds an $(l-1)$-collision with $O(N^{(3^{l-2}-1)/2 \cdot 3^{l-2}})$ quantum queries on average.
In its first step, $\HSX(l)$ makes $N^{1/3^{l-1}} \cdot O(N^{(3^{l-2}-1)/2 \cdot 3^{l-2}}) = O(N^{(3^{l-1}-1)/2 \cdot 3^{l-1}})$ quantum queries.
Moreover, in its second step, $\HSX(l)$ makes $O( \sqrt{N / N^{(3^{l-2}-1)/2 \cdot 3^{l-2}} }) = O(N^{(3^{l-1}-1)/2 \cdot 3^{l-1}})$ quantum queries by using \BBHT.
The third step does not make quantum queries.
Therefore it follows that $\HSX(l)$ makes $O(N^{(3^{l-1}-1)/2 \cdot 3^{l-1}})$ quantum queries in total.
\section{New Quantum Algorithm \Mclaw}
\label{sec:alg}
This section gives our new quantum algorithm \Mclaw~that finds an $l$-claw with $O(c^{3/2}_N N^{(2^{l-1}-1)/(2^l-1)})$ quantum queries for random functions $f_i \colon X_i \rightarrow Y$ for $1 \leq i \leq l$, where $\card{Y}=N$ and there exists a real value $c_N$ 
with $1\le c_N \in o(N^{\frac{1}{2^l -1}})$ such that $\frac{N}{c_N} \leq |X_i|$ holds for all $i$.
Roughly speaking, this means that, an $l$-collision for a random function $f \colon X \rightarrow Y$, where $\card{Y}=N$ and $\card{X}\ge l\cdot N$, can be found with $O(N^{(2^{l-1}-1)/(2^l-1)})$ quantum queries, which improves the previous result~\cite{DBLP:conf/asiacrypt/HosoyamadaSX17} (see~\autoref{sec:prelim}).

Our algorithm assumes that $\card{X_1},\dots,\card{X_l}$ are less than or equal to $\card{Y}$.
However, it can also be applied to the functions of interest in the context of cryptography, i.e., the functions of which domains are much larger than ranges, by restricting the domains of them to suitable subsets.

The main idea of our new algorithm is to improve $\HSX$  by getting rid of its redundancy:
To find an $l$-collision, $\HSX$ recursively calls itself to find many $(l-1)$-collisions.
Once $\HSX$ finds an $(l-1)$-collision $\gamma=(x_1,\dots,x_{l-1},y)$, it stores $\gamma$ in a list $L_{l-1}$, \emph{discards all the data that was used to find $\gamma$}, and then start to search for another $(l-1)$-collision $\gamma'$.
It is inefficient to discard data every time an $(l-1)$-collision is found, and our new algorithm $\Mclaw$ reduces the number of \tnote{quantum queries} by reusing those data.
We note that our algorithm $\Mclaw$ can solve the multiclaw-finding problem as well as the multi-collision finding problem.

We begin with describing our algorithm in an intuitive manner, and then give its formal description.

\subsection{Intuitive Description and Complexity Analysis}
We explain the idea of how to develop the $\BHT$ algorithm,
 how to develop a quantum algorithm to find $3$-claws from $\BHT$,
 and how to extend it further to \tnote{the case of finding an $l$-claw for any $l$}.
\subsubsection{How to develop the BHT algorithm.}
Here we review how the \BHT~algorithm is developed.
Let $f_1 \colon X_1 \rightarrow Y$ and $f_2 \colon X_2 \rightarrow Y$ be \tnote{one-to-one} functions.
The goal of the \BHT~algorithm is to find a ($2$-)claw for $f_1$ and $f_2$ with \tnote{$O(N^{1/3})$} quantum queries.
For simplicity, below we assume that $\card{X_1} = \card{X_2} = \card{Y} = N$ holds.
Let $t_1$ be a parameter that defines the size of a list of $1$-claws for $f_1$. It will be set as $t_1=N^{1/3}$. 

First, collect $t_1$ many $1$-claws for $f_1$ and store them in a list $L_1$.
This first step makes $t_1$ queries.
Second, extend one of $1$-claws in $L_1$ to a $2$-claw for $f_1$ and $f_2$, by using $\BBHT$, and output the obtained $2$-claw.
Since \BBHT~makes \tnote{$O(\sqrt{N/t_1})$} queries to make a $2$-claw from $L_1$, this second step makes \tnote{$O(\sqrt{N/t_1})$} queries (see \autoref{thm:BBHT}).
Overall, the above algorithm makes $q_2(t_1) = t_1 + \sqrt{N/t_1}$ quantum queries \tnote{up to a constant factor}.
The function $q_2(t_1)$ takes its minimum value $2 \cdot N^{1/3}$ when $t_1 = N^{1/3}$.
By setting $t_1 = N^{1/3}$, the BHT algorithm is obtained.

\subsubsection{From \BHT~to a $3$-claw-finding algorithm.}
Next, we show how the above strategy to develop the $\BHT$ algorithm can be extended to develop a $3$-claw-finding algorithm.
Let $f_i \colon X_i \rightarrow Y$ be \tnote{one-to-one} functions for $1 \leq i \leq 3$.
Our goal here is to find a $3$-claw for $f_1$, $f_2$, and $f_3$ with $O(N^{3/7})$ quantum queries.
For simplicity, below we assume $|X_1| = |X_2| = |X_3|= |Y|=N$.
Let $t_1,t_2$ be parameters that define the number of $1$-claws for $f_1$ and that of $2$-claws for $f_1$ and $f_2$, respectively. (They will be fixed later.)

First, collect $t_1$ many $1$-claws for $f_1$ and store them in a list $L_1$.
This first step makes $t_1$ queries.
Second, extend $1$-claws in $L_1$ to $t_2$ many $2$-claws for $f_1$ and $f_2$ by using \BBHT, and store them in a list $L_2$.
Here we do not discard the list $L_1$ until we construct the list $L_2$ of size $t_2$, \tnote{while} the $\HSX$ algorithm does.
Since \BBHT~makes \tnote{$O(\sqrt{N/t_1})$} queries to make a $2$-claw from $L_1$, this second step makes $t_2 \cdot O(\sqrt{N/t_1})$ queries if $t_2=o(t_1)$ (see~\autoref{thm:BBHT}).
Finally, extend one of $2$-claws in $L_2$ to a $3$-claw for $f_1$, $f_2$, and $f_3$ by using \BBHT, and output the obtained $3$-claw.
This final step makes \tnote{$O(\sqrt{N/t_2})$} queries.
Overall, the above algorithm makes $q_3(t_1,t_2) = t_1 + t_2 \cdot \sqrt{N/t_1} +  \sqrt{N/t_2}$ quantum queries \tnote{up to a constant factor}.
The function $q_3(t_1,t_2)$ takes its minimum value $3 \cdot N^{3/7}$ when $t_1 = t_2 \cdot \sqrt{N/t_1} =  \sqrt{N/t_2}$, which is equivalent to $t_1 = N^{3/7}$ and $t_2 = N^{1/7}$.
By setting $t_1=N^{3/7}$ and $t_2=N^{1/7}$,
we can obtain a $3$-claw finding algorithm with \tnote{$O(N^{3/7})$} quantum queries.
\subsubsection{$l$-claw-finding algorithm for general $l$.}
Generalizing the above idea to find a $3$-claw, we can find an $l$-claw for general $l$ as follows.
Let $f_i \colon X_i \rightarrow Y$ be \tnote{one-to-one} functions for $1 \leq i \leq l$.
Our goal here is to find an $l$-claw for $f_1,\dots,f_l$.
For simplicity, below we assume that $|X_1| = \cdots = |X_l|= |Y|=N$ holds.
Let $t_1,\dots,t_{l-1}$ be parameters with $t_i=o(t_{i-1})$ for $i=2,.\dots, l$.

First, collect $t_1$ many $1$-claws for $f_1$ and store them in a list $L_1$.
This first step makes $t_1$ queries.
In the $i$-th step for $2 \leq i \leq l-1$, extend  $t_i$ many $(i-1)$-claws in \tnote{$L_{i-1}$} to $t_i$ many $i$-claws for $f_1,\dots,f_i$ by using \BBHT, and store them in a list $L_i$.
Here we do not discard the list $L_{i-1}$ until we construct the list $L_i$ of size $t_i$.
Since \BBHT~makes \tnote{$O(\sqrt{N/t_{i-1}})$} queries to make an $i$-claw from $L_{i-1}$, the $i$-th step makes \tnote{$t_i \cdot O(\sqrt{N/t_{i-1}})$} queries.
Finally, extend one of $(l-1)$-claws in $L_{l-1}$ to an $l$-claw for $f_1,\dots,f_l$ by using \BBHT, and output the obtained $l$-claw.
This final step makes \tnote{$O(\sqrt{N/t_{l-1}})$} queries.
Overall, this algorithm makes $q_l(t_1,\dots,t_{l-1}) = t_1 + t_2 \cdot \sqrt{N/t_1} + \cdots + t_{l-1}\cdot\sqrt{N / t_{l-2}} + \sqrt{N/t_{l-1}}$ quantum queries \tnote{up to a constant factor}.
The function $q_l(t_1,\dots,t_{l-1})$ takes its minimum value $l \cdot N^{(2^{l-1}-1)/(2^l-1)}$ when $t_1 = t_2 \cdot \sqrt{N/t_1} = \cdots = t_{l-1}\cdot\sqrt{N / t_{l-2}} = \sqrt{N/t_{l-1}}$, which is equivalent to $t_i = N^{(2^{l-i}-1)/(2^l-1)}$.
By setting $t_i = N^{(2^{l-i}-1)/(2^l-1)}$,
we can find an $l$-claw with \tnote{$O(N^{(2^{l-1}-1)/(2^l-1)})$} quantum queries.
Our new quantum algorithm \Mclaw~is developed based on the above strategy \tnote{for random functions}.

\subsection{Formal Description}
\hhnote{Here we formally describe our quantum multiclaw-finding algorithm $\Mclaw$.}
\tnote{A formal} complexity analysis of \Mclaw~is given in the next subsection, and 
this subsection only \tnote{describes how the algorithm works}.

Let $N$ be a sufficiently large integer and suppose that $|Y|=N$ holds.
Below we assume that $\card{X_i} \leq \card{Y}$ holds for all $i$.
This is a reasonable assumption since, if there is an algorithm that solves~\autoref{prob:mclaw} in the case that $\card{X_i} \leq \card{Y}$ holds for all $i$, then we can also solve the problem in other cases:
If $\card{X_i} > \card{Y}$ holds for some $i$, take a subset $S_i \subset X_i$ such that $\card{S_i} = \card{Y}$ and find an $l$-claw for $f_1,\dots,f_{i-1},f_i|_{S_i}, f_{i+1},\dots, f_l$.
Then the $l$-claw is also an $l$-claw for $f_1, \dots, f_l$.

\hhnote{
Here we introduce a corollary that follows from~\autoref{thm:BBHT}.
\begin{corollary}\label{cor:MTPS}
Let $X,Y$ be non-empty finite sets, $f \colon X \rightarrow Y$ be a function, and $L' \subset Y$ be a non-empty subset.
Then there exists a quantum algorithm $\MTPS$ that finds $x$ such that $f(x) \in L'$ with an expected number of quantum queries to $f$ at most $9\sqrt{5\card{X}/\card{f^{-1}(L')}}$.
\end{corollary}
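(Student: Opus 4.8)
The plan is to reduce the multi-target search to an ordinary Boolean search that can be fed to \BBHT\ (\autoref{thm:BBHT}), and then to reshape the instance so that \BBHT's density precondition is satisfied automatically. First I would define the marking function $F\colon X\to\{0,1\}$ by $F(x)=1$ if and only if $f(x)\in L'$. Then $F^{-1}(1)=f^{-1}(L')$, so the number of marked points is exactly $t:=\card{f^{-1}(L')}$, and a single reversible evaluation of $F$ can be realized by computing $f(x)$, testing membership in $L'$ (which needs no further oracle access), and uncomputing $f(x)$; this costs at most two queries to $f$ per evaluation.

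The main obstacle is that \autoref{thm:BBHT} only yields its clean bound when the fraction of marked points is below $17/81$, whereas here $L'$ may have many preimages, so $t/\card{X}$ could be as large as $1$. The key trick I would use to remove this obstacle is domain padding: enlarge the search space to $\tilde X=X\sqcup D$ with $\card{D}=4\card{X}$, so that $\card{\tilde X}=5\card{X}$, and extend the marking function to $\tilde F\colon\tilde X\to\{0,1\}$ by setting $\tilde F\equiv 0$ on the dummy set $D$. The number of marked points is unchanged, $\tilde t=t$, while the ambient size grows by a factor $5$; since $t\le\card{X}$ always holds, the new density satisfies $\tilde t/\card{\tilde X}\le \card{X}/(5\card{X})=1/5<17/81$. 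Thus \BBHT's precondition holds for $\tilde F$ regardless of the original density, and evaluating $\tilde F$ still costs at most two queries to $f$ (a dummy input needs none).

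It then remains to run \BBHT\ on $\tilde F$. By \autoref{thm:BBHT} it would find some $z\in\tilde X$ with $\tilde F(z)=1$ using at most $\tfrac{9}{2}\sqrt{\card{\tilde X}/\tilde t}=\tfrac{9}{2}\sqrt{5\card{X}/t}$ evaluations of $\tilde F$ in expectation; any such $z$ lies in $X$ and satisfies $f(z)\in L'$, so $\MTPS$ outputs $x=z$. Since each evaluation of $\tilde F$ uses at most two queries to $f$, the expected number of queries to $f$ is at most $2\cdot\tfrac{9}{2}\sqrt{5\card{X}/t}=9\sqrt{5\card{X}/t}$, which is the claimed bound: the factor $9$ is the product of \BBHT's constant $9/2$ and the compute/uncompute overhead, and the $\sqrt5$ is exactly the cost of padding. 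The degenerate case $t=0$ (no solution) is vacuous, as the stated bound is then infinite. For $t\ge 1$ the argument would be complete, and I expect the only routine matters left to verify are the elementary inequality $1/5<17/81$ and the two-query implementation of the oracle for $\tilde F$.
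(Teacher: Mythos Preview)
Your proposal is correct and is essentially identical to the paper's argument: the paper pads the domain to $\{1,\dots,5\}\times X$ and sets $F_{L'}(\alpha,x)=1$ iff $\alpha=1$ and $f(x)\in L'$, which is exactly your disjoint-union padding $X\sqcup D$ with $\card{D}=4\card{X}$ and $\tilde F\equiv 0$ on $D$. The density check $t\le\card{X}\le (17/81)\cdot 5\card{X}$, the two-query compute/uncompute implementation of the marking oracle, and the final constant $9=2\cdot\tfrac{9}{2}$ all match the paper's reasoning.
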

Let $F_{L'} \colon \{1,\dots,5\} \times X \rightarrow \{0,1\}$ be the boolean function defined by $F_{L'}(\alpha,x) = 1$ if and only if $\alpha=1$ and $f(x) \in L'$.
A quantum circuit that computes $F_{L'}$ can be implemented with two oracle calls to $f$.
Then, run $\BBHT$ on $F_{L'}$.
Since $\card{\{1,\dots,5\} \times X} = 5\card{X}$ and $\card{F_{L'}^{-1}(1)} \leq \card{X} \leq 17/81 \cdot \card{\{1,\dots,5\} \times X}$ always hold, we can show that the corollary follows from~\autoref{thm:BBHT}. 
}

Our algorithm is parametrized by a positive integer $k\ge 2$, and we denote the algorithm for the parameter $k$ by $\Mclaw_k$.
$\Mclaw_k$ can be applied in the situation that there exists a parameter $c_N \geq 1$ such that $c_N$ is in $o(N^{\frac{1}{2^l - 1}})$ and $|X_i| \geq |Y| / c_N$ holds for each $i$.
We impose an upper limit on the number of queries \tnote{that $\Mclaw_k$ is allowed to make}:
We design $\Mclaw_k$ in such a way that it immediately stops and aborts if the number of queries made reaches the limit specified by the parameter $\mathsf{Qlimit}_k := k \cdot 169 l c^{3/2}_N \cdot  N^{\frac{2^{l-1}-1}{2^{l}-1}}$.
The upper limit $\mathsf{Qlimit}_k$ is necessary to prevent the algorithm from running forever, and to make the expected value of the number of queries converge.
We also define the parameters controlling the sizes of the lists:
\begin{equation}
N_i \colon=
\begin{cases}
\frac{N}{4c_N} & (i=0), \\
N^{\frac{2^{l-i}-1}{2^l-1}} & (i \geq 1).
\end{cases}
\end{equation}
For ease of notation, we define $L_0$ and $L_0'$ as $L_0 = L'_0 = Y$.
Then, $\Mclaw_k$ is described as in Algorithm~\ref{alg:Mclaw}.

\begin{algorithm}
\caption{$\Mclaw_k$}
\label{alg:Mclaw}
\begin{algorithmic}
\REQUIRE Randomly chosen functions $f_1,\dots, f_l$ ($f_i \colon X_i \rightarrow Y$ and $\card{X_i} \leq \card{Y})$).
\ENSURE An $l$-claw for $f_1,\dots,f_l$ or $\perp$.
\renewcommand{\algorithmicensure}{\textbf{Stop condition:}}
\ENSURE If the number of queries reaches $\mathsf{Qlimit}_k$, stop and output $\perp$.
\STATE $L_1, \dots, L_{l} \gets \emptyset$, $L'_1, \dots, L'_{l} \gets \emptyset$.
\FOR{$i=1$ to $l$}
    \FOR{$j=1$ to $\left\lceil 4c_N \cdot N_i \right\rceil$}
        \IF{$i=1$}
            \STATE Take $x_j \in X_1$ that does not appear in $L_1$, $y \gets f_1(x_j)$. \hfill //$1$ query is made
        \ELSE
            \STATE Find $x_j \in X_i$ whose image $y \colon= f_i(x_j)$ is in \tnote{$L_{i-1}'$}
            by running $\MTPS$ on $f_i$ and $L'_{i-1}$. \quad \hfill //multiple queries are made
        \ENDIF
        \STATE $L_i \gets L_i \cup \{(x^{(1)}, \dots, x^{(i-1)},x_j,y)\}$, $L'_i \gets L'_i \cup \{y\}$.
        \STATE $L_{i-1} \gets L_{i-1} \setminus \{(x^{(1)}, \dots, x^{(i-1)},y)\}$, $L'_i \gets L'_{i-1} \setminus \{y\}$.
    \ENDFOR
\ENDFOR
\STATE Return an element $(x^{(1)},\dots,x^{(l)};y) \in L_l$ as an output.
\end{algorithmic}
\end{algorithm}

\subsection{Formal Complexity Analysis}
This section gives a formal complexity analysis of $\Mclaw_k$.
The goal of this section is to show the following theorem.
\begin{theorem}\label{thm:main}
Assume that there exists a parameter $c_N \geq 1$ such that $c_N$ is in $o(N^{\frac{1}{2^l-1}})$ and $\card{X_i} \geq \frac{1}{c_N}\card{Y}$ holds for each $i$.
If $\card{Y}=N$ is sufficiently large, $\Mclaw_k$ finds an $l$-claw with a probability at least
\begin{equation}
1 -\frac{1}{k} - \frac{2l}{N}  - l \cdot \exp\left(- \frac{1}{15} \cdot \frac{N^{\frac{1}{2^l - 1}}}{c_N}\right), \label{eq:MclawkProbLower}
\end{equation}
by making at most
\begin{equation}
\mathsf{Qlimit}_k = k \cdot  169 l c^{3/2}_N \cdot  N^{\frac{2^{l-1}-1}{2^{l}-1}} 
\end{equation}
quantum queries, where $k$ is any positive integer $2$ or more.
\end{theorem}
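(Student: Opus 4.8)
The bound on queries is immediate from the stop condition of $\Mclaw_k$, so the content of the theorem is the success probability, and the plan is the standard ``good event $+$ Markov'' argument. I would isolate a \emph{regular} event $\regular$, defined in terms of the random $f_1,\dots,f_l$ and the execution, on which (i)~every call to $\MTPS$ has a non-empty target (hence terminates), (ii)~completing all $l$ outer iterations necessarily yields a genuine $l$-claw, and (iii)~the \emph{expected} number of queries is at most $\mathsf{Qlimit}_k/k = 169\,l\,c_N^{3/2} N^{(2^{l-1}-1)/(2^l-1)}$. Granting (iii), Markov's inequality conditioned on $\regular$ gives $\Pr[\text{abort}\mid\regular]\le 1/k$, and since $\success\supseteq\regular\cap\{\text{no abort}\}$,
\[
\Pr[\success]\;\ge\;\Pr[\regular]\left(1-\tfrac1k\right)\;\ge\;1-\tfrac1k-\Pr[\neg\regular].
\]
It then remains to design $\regular$ so that $\Pr[\neg\regular]\le 2l/N+l\exp(-\tfrac1{15}N^{1/(2^l-1)}/c_N)$.

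I would construct $\regular$ one outer iteration at a time, exploiting that the $f_i$ are independent. Writing $H_{i-1}$ for the history up to the start of the $i$-th iteration, the list $L'_{i-1}$ produced so far is $H_{i-1}$-measurable and hence independent of $f_i$. For $i\ge 2$ I impose on $f_i$ two conditions: a Chernoff lower bound $\card{f_i^{-1}(L'_{i-1})}\ge\theta_i$ on the preimage of the \emph{full} list $L'_{i-1}$ (its conditional mean is $\card{X_i}\card{L'_{i-1}}/N$, controlled from below through $\card{X_i}\ge N/c_N$ and the guaranteed list size), and a ``no heavy bin'' condition $\max_y\card{f_i^{-1}(y)}=O(\log N)$. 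Because $L'_{i-1}\perp f_i$, the Chernoff estimate holds uniformly over the realized value of $L'_{i-1}$, so no union bound over subsets is needed; summed over $i$ it produces the term $l\exp(-\tfrac1{15}N^{1/(2^l-1)}/c_N)$, the smallest exponent occurring at $i=l$ where the relevant $N_{i-1}=N_{l-1}=N^{1/(2^l-1)}$. The heavy-bin failures together with the event that the images generated in the $i=1$ block are not all distinct contribute the $O(l/N)$ terms. Item (ii) is then routine: each $y$ inserted into $L'_i$ is deleted from $L'_{i-1}$, so the images stored in any $L'_i$ are distinct and, by construction, every tuple in $L_l$ is an $l$-claw.

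For item (iii) I would invoke \autoref{cor:MTPS}: conditioned on the history just before a given $\MTPS$ call, its expected cost is at most $9\sqrt{5\card{X_i}/\card{f_i^{-1}(L'_{i-1})}}$, which on $\regular$ is $O\!\big(\sqrt{c_N N/N_{i-1}}\big)$. The $i$-th block performs $\lceil 4c_N N_i\rceil$ such calls, so by the tower property its expected cost is $O\!\big(c_N^{3/2} N_i\sqrt{N/N_{i-1}}\big)$, while the $i=1$ block costs $\lceil 4c_N N_1\rceil$. The defining property of the parameters $N_i=N^{(2^{l-i}-1)/(2^l-1)}$ is that they equalize the blocks: a direct computation gives $N_i\sqrt{N/N_{i-1}}=N^{(2^{l-1}-1)/(2^l-1)}$ for every $i$. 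Hence each of the $l$ blocks contributes $O\!\big(c_N^{3/2}N^{(2^{l-1}-1)/(2^l-1)}\big)$ and the total expectation is at most $\mathsf{Qlimit}_k/k$, the constant $169$ absorbing the factors $9\sqrt5$, the $4$, and the ceilings.

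The step I expect to be the main obstacle is controlling the preimage of the \emph{shrinking} list inside a single block. As the inner loop deletes $y$'s from $L'_{i-1}$, the deleted elements are selected by the quantum measurements, which query $f_i$; thus $\card{f_i^{-1}(\text{current }L'_{i-1})}$ is \emph{not} independent of $f_i$, and the clean Chernoff bound cannot be applied to it directly. This is exactly where the no-heavy-bin clause enters: over a block at most $\lceil 4c_N N_i\rceil$ elements are removed, and their total preimage is dominated by the largest $\lceil 4c_N N_i\rceil$ preimage sizes, which on $\regular$ is negligible compared with $\theta_i$ because $N_i\ll N_{i-1}$; hence the preimage stays above $\theta_i/2$ throughout the block and the per-call bound of \autoref{cor:MTPS} remains valid. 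A secondary technical point is the interchange of the quantum expectation in \autoref{cor:MTPS} with the classical conditioning on $\regular$; conditioning on $H_{i-1}$ before each block, so that the part of $\regular$ concerning step $i$ becomes an event about $f_i$ alone, keeps this rigorous.
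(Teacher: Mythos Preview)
Your high-level skeleton---isolate a good event on which the expected query count is at most $\mathsf{Qlimit}_k/k$, then apply Markov---is exactly what the paper does, and your observation that $L'_{i-1}$ is independent of $f_i$ is the same independence the paper exploits. The substantive difference is in how the good event is engineered. The paper sets $\good^{(i)}=\{\card{\Img(f_i)\cap L'_{i-1}}\ge N_{i-1}\}$ and proves it in two stages: first $\card{\Img(f_i)}\ge \lceil N/(3c_N)\rceil$ via McDiarmid (this is the source of the $2/N$ per step and hence of the $2l/N$ in the statement), then the size of the intersection with $L'_{i-1}$ via a hypergeometric tail bound (this is the source of the exponential term). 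The payoff of tracking the \emph{image-intersection count} rather than your preimage count $\card{f_i^{-1}(L'_{i-1})}$ is that each deletion from $L'_{i-1}$ decreases $\card{\Img(f_i)\cap L'_{i-1}}$ by exactly one (the deleted $y$ was just hit by $\MTPS$, hence lies in $\Img(f_i)$), so the shrinking-list issue you flag is handled for free and no heavy-bin clause is needed.

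Your Chernoff-on-preimage route is also valid, but two points deserve care. First, your accounting for the $2l/N$ term is off: by the birthday bound, the probability that the $\lceil 4c_N N_1\rceil$ images produced in the $i=1$ block are not all distinct is $\Theta\bigl(c_N^2 N^{-1/(2^l-1)}\bigr)$, not $O(1/N)$; fortunately this clause is unnecessary, since a few repeats in $L'_1$ do no harm to the subsequent Chernoff estimate. Second, the heavy-bin clause costs a logarithm: at $i=l$ the block removes $\lceil 4c_N\rceil$ elements, so your worst-case preimage loss is $O(c_N\log N)$, and this is $o(N_{l-1})=o\bigl(N^{1/(2^l-1)}\bigr)$ only under the slightly stronger hypothesis $c_N=o\bigl(N^{1/(2^l-1)}/\log N\bigr)$. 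Switching to the paper's image-intersection bookkeeping removes both wrinkles and recovers the stated bound over the full range of $c_N$.
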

This theorem shows that, for each integer $k \geq 2$, $\Mclaw_k$ finds an $l$-claw with a constant probability by making $O\left(c^{3/2}_NN^{\frac{2^{l-1}-1}{2^{l}-1}}\right)$ queries.

For later use, we show the following lemma.
\begin{lemma}\label{lem:BallBin}
Let $X,Y$ be non-empty finite sets such that $\card{X} \leq \card{Y}$.
Suppose that a function $f \colon X \rightarrow Y$ is chosen uniformly at random.
Then
\begin{equation}
\Pr_{f \sim U(\Func(X,Y))} \left[ \card{{\rm Im}(f)} \geq  \frac{|X|}{2} - \sqrt{|X|\ln |Y|/2} \right] \geq 1 - \frac{2}{|Y|}
\end{equation}
holds.
\end{lemma}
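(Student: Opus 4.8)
The plan is to recast $f$ as a balls-into-bins experiment. Writing $m := |X|$ and $n := |Y|$, sampling $f \sim U(\Func(X,Y))$ is the same as throwing $m$ balls (the elements of $X$) independently and uniformly into $n$ bins (the elements of $Y$), and the random variable $Z := |\mathrm{Im}(f)|$ is exactly the number of occupied bins. Since $m \le n$ by hypothesis, I expect $Z$ to be close to $m$ with high probability, so the proof separates into a lower bound on $\Exp[Z]$ and a concentration estimate around it.

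First I would bound the expectation. For each bin $y \in Y$ let $N_y$ denote its load, so that $Z = \sum_y \mathbf{1}[N_y \ge 1]$ and $\sum_y N_y = m$. Using the elementary inequality $\mathbf{1}[k \ge 1] \ge k - \binom{k}{2}$, which holds for every integer $k \ge 0$ (it reduces to $(k-1)(k-2)\ge 0$ when $k\ge 1$, and is trivial for $k=0$), one gets $Z \ge m - \sum_y \binom{N_y}{2}$. Taking expectations and noting that $\Exp[\sum_y \binom{N_y}{2}]$ is the expected number of colliding pairs, namely $\binom{m}{2}/n$, yields $\Exp[Z] \ge m - \binom{m}{2}/n \ge m - m^2/(2n) \ge m/2$, where the final step uses $m \le n$. (Equivalently, one may invoke the exact formula $\Exp[Z] = n\bigl(1-(1-1/n)^m\bigr) \ge n\bigl(1-e^{-m/n}\bigr) \ge (1-e^{-1})m$.)

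Next I would establish concentration. Since $Z$ is a function of the $m$ independent random variables $f(x_1),\dots,f(x_m)$, and altering a single value $f(x_i)$ changes the number of occupied bins by at most $1$, the bounded-differences (McDiarmid) inequality applies with all differences equal to $1$, giving $\Pr[Z \le \Exp[Z] - \lambda] \le \exp(-2\lambda^2/m)$ for every $\lambda > 0$. Choosing $\lambda = \sqrt{m\ln n/2}$ makes the right-hand side equal to $\exp(-\ln n) = 1/n$. Combining this with $\Exp[Z] \ge m/2$, the event $\{Z < m/2 - \sqrt{m\ln n/2}\}$ is contained in $\{Z \le \Exp[Z] - \lambda\}$, so its probability is at most $1/n \le 2/n$; the complementary event is precisely the one in the statement, which gives the claimed lower bound $1 - 2/n$.

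The only care-point, rather than a genuine obstacle, is the mean estimate: one must check that $\Exp[Z]$ clears the threshold $m/2$ uniformly over all $m \le n$, which is exactly where the hypothesis $|X| \le |Y|$ enters. The concentration step is a routine application of bounded differences once one observes that a single-coordinate change moves $Z$ by at most one. I note in passing that the argument actually yields failure probability $1/n$, slightly stronger than the stated $2/n$, so no tightening is required.
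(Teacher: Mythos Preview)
Your argument is correct and follows essentially the same route as the paper: view $|\mathrm{Im}(f)|$ as a $1$-Lipschitz function of the independent values $f(x)$, bound its mean below by $|X|/2$ using $|X|\le |Y|$, and apply McDiarmid with $\lambda=\sqrt{|X|\ln|Y|/2}$. The only cosmetic differences are that the paper applies McDiarmid to the complementary variable $\Phi=|Y\setminus\mathrm{Im}(f)|$ and quotes a two-sided form carrying a factor~$2$ (whence the $2/|Y|$), and bounds the mean via $|Y|(1-1/|Y|)^{|X|}\le |Y|-|X|/2$ rather than your collision-count estimate.
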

\begin{proof}
Note that, for each $x \in X$, $f(x)$ is the random variable that takes value in $Y$.
Moreover, $\{f(x)\}_{x \in X}$ is the set of independent random variables.
Let us define a function $\Phi \colon Y^{\times \card{X}} \rightarrow \mathbb{N}$ by $\Phi \left(y_1,\dots,y_{\card{X}}\right) = \left| Y \setminus \{y_i\}_{1 \leq i \leq \card{X}}\right|$.
Then $\Phi$ is $1$-Lipschitz, i.e., 
\begin{equation}
\left| \Phi(y_1,\dots,y_{i-1},y_i,y_{i+1},\dots,y_{\card{X}}) - \Phi(y_1,\dots,y_{i-1},y'_i,y_{i+1},\dots,y_{\card{X}}) \right| \leq 1
\end{equation}
holds for arbitrary choice of elements $y_1,\dots,y_{\card{X}}$, and $y'_i$ in $Y$.
Now we apply the following theorem to $\Phi$.
\begin{theorem}[McDiarmid's Inequality (Theorem 13.7 in \cite{mitzenmacher2017probability})]
Let $M$ be a positive integer, and $\Phi \colon Y^{\times M} \colon \rightarrow \mathbb{N}$ be a $1$-Lipschitz function.
Let $\{y_i\}_{1 \leq i \leq M}$ be the set of independent random variables that take values in $Y$.
Let $\mu$ denote the expectation value $\E_{y_1,\dots,y_M}\left[ \Phi(y_1,\dots,y_M)  \right]$.
Then
\begin{equation}
\Pr_{y_1,\dots,y_M}\left[ \Phi(y_1,\dots,y_M) \geq \mu + \lambda \right] \leq 2e^{-2\lambda^2/M}
\end{equation}
holds.
\end{theorem}
Apply the above theorem with $M = \card{X}$, $\lambda = \sqrt{|X|\ln|Y| / 2}$, and $y_x = f(x)$ for each $x \in X$ (here we identify $X$ with the set $\{1,\dots,\card{X}\}$).
Then, since $\E \left[ \Phi(y_1,\dots,y_M) \right] = |Y|\left( 1- 1/\card{Y}\right)^{\card{X}}$ holds, we have that
\begin{align*}
\Pr_{f \sim U(\Func(X,Y))}\left[ \Phi(y_1,\dots,y_M) \geq \card{Y}\left( 1- 1/\card{Y}\right)^{\card{X}} + \sqrt{|X|\ln|Y| / 2} \right] \leq \frac{2}{\card{Y}}.
\end{align*}
In addition, it follows that
\begin{align}
\card{Y}\left( 1- 1/\card{Y}\right)^{\card{X}} &\leq \card{Y}e^{-\card{X}/\card{Y}} \leq \card{Y} \left(1 - \frac{|X|}{|Y|} + \frac{1}{2}\left(\frac{|X|}{|Y|}\right)^2 \right) \nonumber \\
&= |Y| - |X| \left( 1 - \frac{1}{2}\frac{|X|}{|Y|}\right) \leq |Y| - \frac{\card{X}}{2},
\end{align}
where we used the assumption that $\card{X} \leq \card{Y}$ for the last inequality.
Since $\Phi(y_1,\dots,y_M) = \left|Y \setminus {\rm Im}(f) \right|$ and $\left|{\rm Im}(f)\right| = |Y| -  \left|Y \setminus {\rm Im}(f) \right|$ hold, it follows that 
$\left|{\rm Im}(f)\right|$ 
is at least
\begin{align}
 |Y|  - \left(\card{Y} - \frac{\card{X}}{2} + \sqrt{\card{X} \ln \card{Y} / 2} \right) = \frac{\card{X}}{2} -  \sqrt{\card{X} \ln \card{Y} / 2}
\end{align}
with a probability at least $1 - \frac{2}{\card{Y}}$, which completes the proof.
\qed
\end{proof}

\begin{proof}[of~\autoref{thm:main}]
We show that~\autoref{eq:MclawkProbLower} holds.
Let us define $\good^{(i)}$ to be the event that
\begin{equation}
\card{\Img(f_i) \cap L_{i-1}'} \geq N_{i-1}
\end{equation}
holds just before $\Mclaw_k$ starts to construct $i$-claws.
(Intuitively, under the condition that $\good^{(i)}$ occurs, the number of queries does not become too large.)
We show the following claim.
\begin{claim}
For sufficiently large $N$,
\begin{equation}
\Pr\left[\good^{(i)}\right] \geq 1 - \frac{2}{N}  - \exp\left(- \frac{1}{15} \cdot \frac{N_{i-1}}{c_N}\right).
\end{equation}
holds.
\end{claim}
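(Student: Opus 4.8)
The plan is to exploit that, just before the $i$-th stage begins, the list $L'_{i-1}$ is determined by $f_1,\dots,f_{i-1}$ (and the algorithm's internal coins) and is therefore independent of $f_i$. I would first condition on an arbitrary realization $L'_{i-1}=S$ with $\card{S}=m:=\lceil 4c_N N_{i-1}\rceil \ge 4c_N N_{i-1}$ (the size $L'_{i-1}$ reaches once the earlier stages complete; for $i=1$ one has $S=Y$ and $m=N$), so that it suffices to bound $\Pr_{f_i}\!\left[\card{\Img(f_i)\cap S}\ge N_{i-1}\right]$ for a fixed target set $S$. The tempting move --- apply McDiarmid's inequality directly to the number of elements of $S$ missed by $f_i$, viewed as a $1$-Lipschitz function of the $\card{X_i}$ values $f_i(x)$ --- is bound to fail: since $\card{X_i}$ may be as large as $N$, McDiarmid only controls deviations of order $\sqrt{\card{X_i}\ln N}\approx\sqrt{N\ln N}$, whereas the target $N_{i-1}$ is polynomially smaller than $\sqrt N$ for $i\ge 2$. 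The fix, which is the crux of the argument, is to localize the randomness to the small preimage set that maps into $S$.

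Concretely, set $W:=f_i^{-1}(S)$ and $w:=\lceil 3N_{i-1}\rceil$, and split the complementary event into two pieces. For the first, I would show $\card{W}\ge w$ with high probability: the indicators $\mathbb{1}[f_i(x)\in S]$ for $x\in X_i$ are independent with mean $m/N$, so $\E[\card{W}]=\card{X_i}\,m/N\ge (N/c_N)(4c_N N_{i-1}/N)=4N_{i-1}$ (the factors $c_N$ cancel), and the multiplicative Chernoff bound gives $\Pr[\card{W}<3N_{i-1}]\le \exp(-N_{i-1}/8)$, which is at most $\exp\!\big(-\tfrac{1}{15}N_{i-1}/c_N\big)$ because $c_N\ge 1$. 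For the second piece I condition on $W=A$ for an arbitrary $A$ with $\card{A}\ge w$ and let $W_0\subseteq A$ be its first $w$ elements in a fixed ordering of $X_i$. Conditioned on $W=A$, the values $(f_i(x))_{x\in A}$ are i.i.d.\ uniform on $S$, so $f_i|_{W_0}$ is a uniformly random function $W_0\to S$ with $\card{W_0}=w\le m=\card{S}$, and $\Img(f_i|_{W_0})\subseteq \Img(f_i)\cap S$. Running the McDiarmid computation behind \autoref{lem:BallBin} on $f_i|_{W_0}$, but choosing the deviation $\lambda=\sqrt{w\ln N/2}$ (so that the tail equals $2e^{-\ln N}=2/N$ rather than $2/m$), yields $\card{\Img(f_i|_{W_0})}\ge w/2-\sqrt{w\ln N/2}$ except with probability $2/N$; since $a\mapsto a/2-\sqrt{a\ln N/2}$ is increasing for $a\ge \tfrac12\ln N$ and $N_{i-1}\ge N^{1/(2^l-1)}$ dominates $\ln N$, this lower bound is at least $N_{i-1}$ for large $N$.

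Combining the two pieces by a union bound over the complementary event gives
\[
\Pr\!\left[\neg\good^{(i)}\right]\le \Pr[\card{W}<w]+\frac{2}{N}\le \exp\!\left(-\tfrac{1}{15}\tfrac{N_{i-1}}{c_N}\right)+\frac{2}{N},
\]
which is exactly the claimed bound. The main obstacle is the one flagged above: a black-box concentration inequality on $f_i$ over its whole domain is too weak, so the argument must first cut down to the $\Theta(N_{i-1})$-size preimage $W$ (whose Lipschitz ``dimension'' is small) and only then invoke the ball-into-bins estimate. A secondary point requiring care is that the native error term of \autoref{lem:BallBin} is $2/\card{S}$, so to recover the stated $2/N$ one must re-run its McDiarmid step with the slightly larger deviation $\sqrt{w\ln N/2}$. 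The remaining inequalities ($w\le m$, monotonicity of the square-root correction, and $\exp(-N_{i-1}/8)\le\exp(-\tfrac{1}{15}N_{i-1}/c_N)$) are routine and hold for all sufficiently large $N$.
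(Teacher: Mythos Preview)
Your argument is correct and takes a genuinely different route from the paper's. The paper decomposes the complementary event the other way round: it first applies \autoref{lem:BallBin} (McDiarmid over the \emph{full} domain $X_i$) to establish the global event $\pregood^{(i)}\colon \card{\Img(f_i)}\ge \lceil N/(3c_N)\rceil$ with failure probability $2/N$; then, conditioning on the value of $\card{\Img(f_i)}$, it observes that $\card{\Img(f_i)\cap L'_{i-1}}$ is hypergeometric and invokes a tail bound of Hush and Scovel to extract the $\exp\!\big(-\tfrac{1}{15}N_{i-1}/c_N\big)$ term. You instead localize to the preimage $W=f_i^{-1}(S)$: Chernoff for a sum of i.i.d.\ Bernoullis controls $\card{W}$, and then McDiarmid is applied only on the $\Theta(N_{i-1})$-sized domain $W_0\subseteq W$, where the Lipschitz fluctuation $\sqrt{\card{W_0}\ln N}$ is dominated by $N_{i-1}$---precisely curing the obstacle you flagged. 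Your route is the more elementary one, avoiding the hypergeometric machinery entirely and using only independent-sum concentration plus a second pass through the McDiarmid computation (with the re-tuned $\lambda=\sqrt{w\ln N/2}$ to hit $2/N$ rather than $2/m$). The paper's route is more modular---\autoref{lem:BallBin} is used as a black box and the residual dependence is packaged into a single named distribution---and does not need the conditioning trick that makes $f_i|_{W_0}$ uniform on $S$; on the other hand it implicitly relies on a stochastic monotonicity step (larger $\card{\Img(f_i)}$ only helps) to pass from the ``$=$'' conditioning to the ``$\geq$'' event $\pregood^{(i)}$.
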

\begin{proof}
In this proof we consider the situation that $\Mclaw_k$ has finished to make $L_{i-1}$ and before starting to make $i$-claws.
In particular, we assume that $\card{L_{i-1}}=\card{L'_{i-1}}=\left\lceil4c_N N_{i-1}\right\rceil$.

Let $\pregood^{(i)}$ be the event that $\card{{\rm Im}(f_i)} \geq \left\lceil N/3c_N \right\rceil$ holds.
Since $c_N$ is in $o(N^{\frac{1}{2^l-1}})$, we have that $\frac{\card{X_i}}{2} - \sqrt{\card{X_i}\ln\card{Y}/2} \geq \left\lceil \frac{N}{3c_N} \right\rceil$ holds for sufficiently large $N$.
Hence
\begin{equation}
\Pr\left[ \pregood^{(i)} \right] \geq 1 - \frac{2}{\card{Y}}
\end{equation}
follows from~\autoref{lem:BallBin}.

Let us identify $X$ and $Y$ with the sets $\{1,\dots,\card{X}\}$ and $\{1,\dots,\card{Y}\}$, respectively.
Let $B_j$ be the $j$-th element in ${\rm Im}(f)$.
Let $\chi_j$ be the indicator variable that is defined by $\chi_j = 1$ if and only if $B_j \in L'_{i-1}$, and define a random variable $\chi$ by $\chi\colon= \sum_j \chi_j$.
Then $\chi$ follows the hypergeometric distribution.
We use the following theorem as a fact.
\begin{theorem}[Theorem 1 in~\cite{hush2005concentration}]
Let $K=K(n_1,n,m)$ denote the hypergeometric random variable describing the process of counting how many defectives are selected when $n_1$ items are randomly selected without replacement from $n$ items among which there are $m$ defective ones. Let $\lambda \geq 2$.
Then
\begin{equation}
\Pr\left[K - \E[K] < -\lambda \right] < e^{-2\alpha_{n_1,n,m}(\lambda^2-1)}
\end{equation}
holds, where
\begin{equation}
\alpha_{n_1,m,n} = \max\left( \left( \frac{1}{n_1+1} + \frac{1}{n-n_1+1} \right), \left(\frac{1}{m+1} + \frac{1}{n-m+1} \right)\right).
\end{equation}
\end{theorem}
Apply the above theorem with $n_1 = \left\lceil N / 3c_N\right\rceil$, $n=N$, and $m = |L'_{i-1}| = \left\lceil 4c_N N_{i-1} \right\rceil$, for the random variable $\chi$ under the condition that $\card{{\rm Im}(f_i)} = \left\lceil N/3c_N\right\rceil$ holds.
Let $\equal$ denote the event that $\card{{\rm Im}(f_i)} = \left\lceil N/3c_N \right\rceil$ holds.
Then $\E\left[\chi\middle|\equal\right]= \frac{n_1m}{n} \geq \frac{4}{3}{N_{i-1}}$ holds, and we have that 
\begin{align}
&\Pr\left[ \chi - \E\left[\chi \middle| \equal \right] < -\frac{1}{4}  \E\left[\chi \middle| \equal\right] \middle| \equal \right] \nonumber \\
&\quad \leq
\exp\left( -2\left( \frac{1}{m+1} + \frac{1}{n-m+1}\right) \left( \right( \E\left[\chi\middle|\equal\right] / 4 \left)^2 - 1 \right)\right) \nonumber \\
&\quad \leq \exp\left( -\frac{1}{15m} \left( \E\left[\chi\middle|\equal\right]\right)^2 \right) \leq \exp\left(- \frac{1}{15} \cdot \frac{N_{i-1}}{c_N}\right)
\end{align}
for sufficiently large $N$, where we use $c_N=o(N^{\frac{1}{2^l-1}})$ in evaluating $\alpha_{n_1,m,n}$.
Hence
\begin{equation}
\Pr\left[ \chi \geq N_{i-1} \middle| \equal \right] \geq 1 - \exp\left(- \frac{1}{15} \cdot \frac{N_{i-1}}{c_N}\right)
\end{equation}
holds, which implies that
\begin{align}
\Pr\left[\left| {\rm Im}(f) \cap L'_{i-1}\right| \geq N_{i-1} \middle| \pregood^{(i)} \right]
&=
\Pr\left[\chi \geq N_{i-1} \middle| \pregood^{(i)} \right] \nonumber \\
&\geq
\Pr\left[\chi \geq N_{i-1} \middle| \equal \right] \nonumber \\
&\geq 1 - \exp\left(- \frac{1}{15} \cdot \frac{N_{i-1}}{c_N}\right).
\end{align}

Therefore we have that
\begin{align}
\Pr\left[ \good^{(i)} \right]
&>
\Pr\left[ \good^{(i)} \middle| \pregood^{(i)} \right] \cdot \Pr\left[ \pregood^{(i)} \right] \nonumber\\
&= \Pr\left[ \left| {\rm Im}(f) \cap L'_{i-1}\right| \geq N_{i-1}  \middle| \pregood^{(i)} \right] \cdot \Pr\left[ \pregood^{(i)} \right] \nonumber\\
&\geq \left(1 - \frac{2}{|Y|} \right)\left( 1 - \exp\left(- \frac{1}{15} \cdot \frac{N_{i-1}}{c_N}\right) \right) \nonumber \\
&\geq 1 - \frac{2}{|Y|}  - \exp\left(- \frac{1}{15} \cdot \frac{N_{i-1}}{c_N}\right),
\end{align}
which completes the proof.
\qed
\end{proof}

Let $\good$ denote the event $\good^{(1)} \land \cdots \land \good^{(l)}$.
Then we can show the following claim.
\begin{claim}
For sufficiently large $N$, it holds that
\begin{equation}
\E\left[ Q \mymiddle \good \right] \leq \frac{1}{k}\mathsf{Qlimit}_k,
\end{equation}
where $Q$ is the total number of queries made by $\Mclaw_k$.
\end{claim}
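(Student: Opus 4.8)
The plan is to condition on $\good$ and bound the expected number of queries made in each of the $l$ levels of $\Mclaw_k$ separately, then sum. Write $Q = Q_1 + \sum_{i=2}^{l} Q_i$, where $Q_1$ is the number of queries spent building $L_1$ and, for $i\ge 2$, $Q_i$ is the total number of queries made by the $\lceil 4c_N N_i\rceil$ invocations of $\MTPS$ that build $L_i$ out of $L_{i-1}$. The level-$1$ cost is deterministic, since each of the $\lceil 4c_N N_1\rceil$ iterations makes exactly one query; thus $Q_1 = \lceil 4c_N N_1\rceil \le 4c_N N_1 + 1$, which is $O\!\left(c_N N^{(2^{l-1}-1)/(2^l-1)}\right)$ because $N_1 = N^{(2^{l-1}-1)/(2^l-1)}$.

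For $i\ge 2$ the first and main step is to show that, conditioned on $\good^{(i)}$, the quantity $\card{\Img(f_i)\cap L'_{i-1}}$, which governs the cost of $\MTPS$, stays at least $\tfrac12 N_{i-1}$ throughout the level. At the instant level $i$ begins, $\good^{(i)}$ gives $\card{\Img(f_i)\cap L'_{i-1}} \ge N_{i-1}$. Each iteration removes from $L'_{i-1}$ exactly one value $y = f_i(x_j)\in\Img(f_i)$, so after fewer than $4c_N N_i$ removals the intersection is still at least $N_{i-1} - 4c_N N_i$. Because $N_i/N_{i-1} = N^{-2^{l-i}/(2^l-1)}$ and $c_N = o(N^{1/(2^l-1)})$, I would verify that $4c_N N_i \le \tfrac12 N_{i-1}$ for sufficiently large $N$; the tightest instance is $i=l$, where $N_l = 1$ and the gap exponent $2^{l-i}/(2^l-1)$ attains its smallest value $1/(2^l-1)$, and there $8c_N \le N^{1/(2^l-1)}$ still holds by the hypothesis on $c_N$. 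Consequently $\card{f_i^{-1}(L'_{i-1})} \ge \card{\Img(f_i)\cap L'_{i-1}} \ge \tfrac12 N_{i-1} \ge 1$ at every iteration of level $i$; in particular $f_i^{-1}(L'_{i-1})$ is never empty, so $\MTPS$ never runs forever and the expected-query guarantee of \autoref{cor:MTPS} applies.

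The second step assembles the per-call bounds. Since $\card{X_i}\le N$, \autoref{cor:MTPS} bounds the expected number of queries of each $\MTPS$ call, conditioned on the current lists, by $9\sqrt{5\card{X_i}/\card{f_i^{-1}(L'_{i-1})}} \le 9\sqrt{10N/N_{i-1}}$. Summing these conditional expectations over the $\lceil 4c_N N_i\rceil$ iterations via the tower property gives $\E\!\left[Q_i \mymiddle \good\right] \le \lceil 4c_N N_i\rceil\cdot 9\sqrt{10N/N_{i-1}}$. The key arithmetic identity is $N_i\sqrt{N/N_{i-1}} = N^{(2^{l-1}-1)/(2^l-1)}$, which follows by adding the exponents $\tfrac{2^{l-i}-1}{2^l-1} + \tfrac12 - \tfrac{2^{l-i+1}-1}{2(2^l-1)}$ and simplifying the numerator to $2^l-2$. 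Hence $\E\!\left[Q_i\mymiddle\good\right] \le 36\sqrt{10}\,c_N N^{(2^{l-1}-1)/(2^l-1)} + 9\sqrt{10N/N_{i-1}}$, and since the exponent of $\sqrt{N/N_{i-1}}$ is at most $(2^{l-1}-1)/(2^l-1)$ for all $i\le l$ and $c_N\ge 1$, each level is bounded by $169\,c_N^{3/2} N^{(2^{l-1}-1)/(2^l-1)}$ (with the $c_N^{3/2}$ a deliberately loose but valid upper bound on the $O(c_N)$ true cost).

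Summing over all $l$ levels then yields $\E\!\left[Q\mymiddle\good\right] \le 169\,l\,c_N^{3/2} N^{(2^{l-1}-1)/(2^l-1)} = \tfrac1k\,\mathsf{Qlimit}_k$, noting that truncation at $\mathsf{Qlimit}_k$ only decreases $Q$ and so cannot break this upper bound. I expect the hardest part to be the conditional-expectation bookkeeping in the third step: the $\MTPS$ costs are random and dependent across iterations and levels, so the argument must express each per-call bound as a conditional expectation given the history (inside the event $\good$) and then invoke the tower property, rather than treating the calls as independent. A secondary technical point, already flagged above, is checking $4c_N N_i \le \tfrac12 N_{i-1}$ uniformly in $i$, whose tightest case is $i=l$ with $N_l=1$.
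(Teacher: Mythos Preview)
Your proposal is correct and follows essentially the same route as the paper: decompose $Q$ by level, use $\good^{(i)}$ to guarantee $\card{\Img(f_i)\cap L'_{i-1}}$ stays of order $N_{i-1}$ throughout level $i$ (the paper notes $N_{i-1}=\omega(c_N N_i)$ where you write the explicit $\tfrac12 N_{i-1}$ threshold), invoke \autoref{cor:MTPS} per call, and use the identity $N_i\sqrt{N/N_{i-1}} = N^{(2^{l-1}-1)/(2^l-1)}$ to sum. The only cosmetic differences are that the paper applies the $\MTPS$ bound uniformly even at level $1$ (yielding the $c_N^{3/2}$ directly there), while you treat $Q_1$ deterministically and then pad the $i\ge 2$ levels from $O(c_N)$ to $c_N^{3/2}$; and your tower-property remark about conditioning on the full history is in fact more careful than the paper's shorthand $\E[Q\mid\good]=\sum_i \E[Q^{(i)}\mid\good^{(i)}]$.
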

\begin{proof}
Let us fix $i$ and $j$.
Let $Q^{(i)}_j$ denote the number of queries made by $\Mclaw_k$ in the $j$-th search to construct $i$-claws, and $Q^{(i)}$ denote $\sum_j Q^{(i)}_j$.
In the $j$-th search to construct $i$-claw, we search $X_i$ for $x$ with $f_i(x)\in L'_{i-1}$, where there exist at least $|L'_{i-1} \cap {\rm Im}(f)| \geq N_{i-1} -j+1$ answers in $X_i$ under the condition that $\good^{(i)}$ occurs.
From~\autoref{cor:MTPS}, the expected value of the number of queries made by $\MTPS$ in the $j$-th search to construct $i$-claws is upper bounded by
\begin{align}
9 \sqrt{5\card{X_i}/\card{f^{-1}_i(L'_{i-1})}} &\leq 9 \sqrt{5\card{X_i}/\card{L'_{i-1} \cap {\rm Im}(f)}} 
\leq 21 \sqrt{N/N_{i-1}}
\end{align}
for each $j$ under the condition that $\good^{(i)}$ occurs, for sufficiently large $N$ (we used the condition that $N_{i-1} = \omega(c_N N_i)$ holds for the last inequality).

Hence it follows that
\begin{align*}
 \E\left[Q^{(i)} \mymiddle \good^{(i)}\right]
 &= \E\left[\sum_{j} Q_j^{(i)} \mymiddle \good^{(i)}\right] = \sum_j \E\left[Q_j^{(i)} \mymiddle \good^{(i)}\right] \\
 &\leq \sum_{1 \leq j \leq \lceil 4c_N N_i \rceil} 21 \sqrt{{N}/{N_{i-1}}} 
 \leq
 \begin{cases}
 169 c^{3/2}_N N^{\frac{2^{l-1}-1}{2^l-1}} & (i=1)\\
 85 c_N N^{\frac{2^{l-1}-1}{2^l-1}} & (i\geq 2)
 \end{cases}
\end{align*}
for sufficiently large $N$.
Hence we have \hhnote{that $\E[Q \mid  \good]
 = \sum_{i} \E\left[Q^{(i)} \mymiddle \good^{(i)}\right]$ is upper bounded by} 
\begin{align*}
169c^{3/2}_N N^{\frac{2^{l-1}-1}{2^{l}-1}} 
     + \sum_{i=2}^{l} 85 c_N N^{\frac{2^{l-1}-1}{2^{l}-1}}
 &\leq 169 l c^{3/2}_N \cdot  N^{\frac{2^{l-1}-1}{2^{l}-1}}  = \frac{1}{k} \mathsf{Qlimit}_k,
\end{align*}
which completes the proof.
\qed
\end{proof}

From the above claims it follows that \hhnote{$\E[Q]$ is upper bounded by}
\begin{align}
\E[Q \mid  \good] + \E[Q \mid \lnot \good] \Pr[\lnot \good] 
&\leq \left( \frac{1}{k} + \Pr[\lnot \good]  \right) \cdot \mathsf{Qlimit}_k \label{eq:ExpectQupper},
\end{align}
and \hhnote{$\Pr\left[\lnot \good\right]$ is upper bounded by $\sum_{i} \Pr\left[\neg \good^{(i)}\right]$, which is further upper bounded by}
\begin{align}
\sum_{i} \left( \frac{2}{N}  + \exp\left(- \frac{1}{15} \cdot \frac{N_{i-1}}{c_N}\right) \right)
\leq \frac{2l}{N}  + l \cdot \exp\left(- \frac{1}{15} \cdot \frac{N^{\frac{1}{2^l - 1}}}{c_N}\right). \label{eq:GoodProbBound}
\end{align}

From Markov's inequality, the probability that $Q$ reaches $\mathsf{Qlimit}_k$ is at most
\begin{equation}
\Pr \left[ Q \geq \mathsf{Qlimit}_k \right] \leq \frac{\E[Q]}{\mathsf{Qlimit}_k} \leq \frac{1}{k} + \Pr[\lnot \good]. \label{eq:ProbReachLimit}
\end{equation}
The event ``$Q$ does not reach $\mathsf{Qlimit}_k$'' implies that $\Mclaw_k$ finds an $l$-claw.
Thus, from~\autoref{eq:GoodProbBound} and~\autoref{eq:ProbReachLimit}, the probability that $\Mclaw_k$ finds an $l$-claw is lower bounded by
\begin{equation}
1 - \frac{1}{k} - \frac{2l}{N}  - l \cdot \exp\left(- \frac{1}{15} \cdot \frac{N^{\frac{1}{2^l - 1}}}{c_N}\right),
\end{equation}
which completes the proof.
\qed
\end{proof}

\section{Conclusion}
\label{sec:conclusion}

This paper \tnote{has} developed a new quantum algorithm to find multicollisions of random functions.
Our new algorithm finds an $l$-collision of a random function $F \colon [N] \rightarrow [N]$ with $O\left(N^{(2^{l-1}-1)/(2^l-1)}\right)$ quantum queries on average, which improves the previous upper bound $O(N^{(3^{l-1}) / (2 \cdot 3^{l-1})})$ by Hosoyamada \etal~\cite{DBLP:conf/asiacrypt/HosoyamadaSX17}.
In fact, our algorithm can find \tnote{an} $l$-claw of random functions  $f_i \colon [N] \rightarrow [N]$ for $1 \leq i \leq l$ with the same average complexity $O\left(N^{(2^{l-1}-1)/(2^l-1)}\right)$.
In describing the algorithm,
we assumed for ease of analysis and understanding that intermediate measurements were allowed.
However, it is easy to move all measurements to the final step of the algorithm by the standard techniques.
In this paper, we focused only on query complexity, and did not provide the analyses of other complexity measures. However, it is not difficult to show that the space complexity and the depth of quantum circuits
are both bounded by $\tilde{O}\left(N^{(2^{l-1}-1)/(2^l-1)}\right)$.
For applications to cryptanalyses, it is of interest to further study time-and-memory-efficient variants.

\bibliographystyle{alpha}
\bibliography{ms}

\newcommand{\etalchar}[1]{$^{#1}$}
\begin{thebibliography}{BDH{\etalchar{+}}01}

\bibitem[Amb04]{DBLP:conf/focs/Ambainis04}
Andris Ambainis.
\newblock Quantum walk algorithm for element distinctness.
\newblock In {\em 45th Symposium on Foundations of Computer Science {(FOCS}
  2004), 17-19 October 2004, Rome, Italy, Proceedings}, pages 22--31, 2004.

\bibitem[BBHT98]{boyer1998tight}
Michel Boyer, Gilles Brassard, Peter H{\o}yer, and Alain Tapp.
\newblock Tight bounds on quantum searching.
\newblock {\em Fortschritte der Physik: Progress of Physics}, 46(4-5):493--505,
  1998.

\bibitem[BDH{\etalchar{+}}01]{DBLP:conf/coco/BuhrmanDHHMSW01}
Harry Buhrman, Christoph D{\"{u}}rr, Mark Heiligman, Peter H{\o}yer,
  Fr{\'{e}}d{\'{e}}ric Magniez, Miklos Santha, and Ronald de~Wolf.
\newblock Quantum algorithms for element distinctness.
\newblock In {\em Proceedings of the 16th Annual {IEEE} Conference on
  Computational Complexity, Chicago, Illinois, USA, June 18-21, 2001}, pages
  131--137, 2001.

\bibitem[BDRV18]{DBLP:conf/eurocrypt/BermanDRV18}
Itay Berman, Akshay Degwekar, Ron~D. Rothblum, and Prashant~Nalini Vasudevan.
\newblock Multi-collision resistant hash functions and their applications.
\newblock In {\em Advances in Cryptology - {EUROCRYPT} 2018 - 37th Annual
  International Conference on the Theory and Applications of Cryptographic
  Techniques, Tel Aviv, Israel, April 29 - May 3, 2018 Proceedings, Part {II}},
  pages 133--161, 2018.

\bibitem[Bel12]{DBLP:conf/focs/Belovs12}
Aleksandrs Belovs.
\newblock Learning-graph-based quantum algorithm for $k$-distinctness.
\newblock In {\em 53rd Annual {IEEE} Symposium on Foundations of Computer
  Science, {FOCS} 2012, New Brunswick, NJ, USA, October 20-23, 2012}, pages
  207--216, 2012.

\bibitem[BHT98]{DBLP:conf/latin/BrassardHT98}
Gilles Brassard, Peter H{\o}yer, and Alain Tapp.
\newblock Quantum cryptanalysis of hash and claw-free functions.
\newblock In {\em {LATIN} '98: Theoretical Informatics, Third Latin American
  Symposium, Campinas, Brazil, April, 20-24, 1998, Proceedings}, pages
  163--169, 1998.

\bibitem[BKP18]{DBLP:conf/stoc/BitanskyKP18}
Nir Bitansky, Yael~Tauman Kalai, and Omer Paneth.
\newblock Multi-collision resistance: a paradigm for keyless hash functions.
\newblock In {\em Proceedings of the 50th Annual {ACM} {SIGACT} Symposium on
  Theory of Computing, {STOC} 2018, Los Angeles, CA, USA, June 25-29, 2018},
  pages 671--684, 2018.

\bibitem[CNS17]{DBLP:conf/asiacrypt/ChaillouxNS17}
Andr{\'{e}} Chailloux, Mar{\'{\i}}a Naya{-}Plasencia, and Andr{\'{e}}
  Schrottenloher.
\newblock An efficient quantum collision search algorithm and implications on
  symmetric cryptography.
\newblock In {\em Advances in Cryptology - {ASIACRYPT} 2017 - 23rd
  International Conference on the Theory and Applications of Cryptology and
  Information Security, Hong Kong, China, December 3-7, 2017, Proceedings, Part
  {II}}, pages 211--240, 2017.

\bibitem[Gro96]{DBLP:conf/stoc/Grover96}
Lov~K. Grover.
\newblock A fast quantum mechanical algorithm for database search.
\newblock In {\em Proceedings of the Twenty-Eighth Annual {ACM} Symposium on
  the Theory of Computing, Philadelphia, Pennsylvania, USA, May 22-24, 1996},
  pages 212--219, 1996.

\bibitem[HS05]{hush2005concentration}
Don Hush and Clint Scovel.
\newblock Concentration of the hypergeometric distribution.
\newblock {\em Statistics \& probability letters}, 75(2):127--132, 2005.

\bibitem[HSX17]{DBLP:conf/asiacrypt/HosoyamadaSX17}
Akinori Hosoyamada, Yu~Sasaki, and Keita Xagawa.
\newblock Quantum multicollision-finding algorithm.
\newblock In {\em Advances in Cryptology - {ASIACRYPT} 2017 - 23rd
  International Conference on the Theory and Applications of Cryptology and
  Information Security, Hong Kong, China, December 3-7, 2017, Proceedings, Part
  {II}}, pages 179--210, 2017.

\bibitem[JLM14]{DBLP:conf/asiacrypt/JovanovicLM14}
Philipp Jovanovic, Atul Luykx, and Bart Mennink.
\newblock Beyond $2^{c/2}$ security in sponge-based authenticated encryption
  modes.
\newblock In {\em Advances in Cryptology - {ASIACRYPT} 2014 - 20th
  International Conference on the Theory and Application of Cryptology and
  Information Security, Kaoshiung, Taiwan, R.O.C., December 7-11, 2014.
  Proceedings, Part {I}}, pages 85--104, 2014.

\bibitem[KNY18]{DBLP:conf/eurocrypt/KomargodskiNY18}
Ilan Komargodski, Moni Naor, and Eylon Yogev.
\newblock Collision resistant hashing for paranoids: Dealing with multiple
  collisions.
\newblock In {\em Advances in Cryptology - {EUROCRYPT} 2018 - 37th Annual
  International Conference on the Theory and Applications of Cryptographic
  Techniques, Tel Aviv, Israel, April 29 - May 3, 2018 Proceedings, Part {II}},
  pages 162--194, 2018.

\bibitem[LZ18]{DBLP:journals/iacr/LiuZ18}
Qipeng Liu and Mark Zhandry.
\newblock On finding quantum multi-collisions.
\newblock {\em {IACR} Cryptology ePrint Archive}, 2018:1096, 2018.

\bibitem[MU17]{mitzenmacher2017probability}
Michael Mitzenmacher and Eli Upfal.
\newblock {\em Probability and computing: Randomization and probabilistic
  techniques in algorithms and data analysis}.
\newblock Cambridge university press, 2017.

\bibitem[RS96]{DBLP:conf/spw/RivestS96}
Ronald~L. Rivest and Adi Shamir.
\newblock Payword and micromint: Two simple micropayment schemes.
\newblock In {\em Security Protocols, International Workshop, Cambridge, United
  Kingdom, April 10-12, 1996, Proceedings}, pages 69--87, 1996.

\bibitem[Tan09]{DBLP:journals/tcs/Tani09}
Seiichiro Tani.
\newblock Claw finding algorithms using quantum walk.
\newblock {\em Theor. Comput. Sci.}, 410(50):5285--5297, 2009.

\bibitem[Zha15]{DBLP:journals/qic/Zhandry15}
Mark Zhandry.
\newblock A note on the quantum collision and set equality problems.
\newblock {\em Quantum Information {\&} Computation}, 15(7{\&}8):557--567,
  2015.

\end{thebibliography}

\end{document}